\documentclass[journal]{IEEEtran}

\usepackage{cite}
\usepackage{amsmath,amssymb,amsfonts}
\usepackage{algorithmic}
\usepackage{graphicx}
\usepackage{algorithm,algorithmic}
\usepackage{hyperref}
\hypersetup{hidelinks=true}
\usepackage{textcomp}

\usepackage{amsthm}
\newtheorem{definition}{Definition}
\newtheorem{theorem}{Theorem}
\newtheorem{lemma}{Lemma}

\def\BibTeX{{\rm B\kern-.05em{\sc i\kern-.025em b}\kern-.08em
    T\kern-.1667em\lower.7ex\hbox{E}\kern-.125emX}}
\title{Bounds on Prediction Error When Using an Impulse Response/Equilibrium Model Structure}
\author{Tyrone L. Vincent and Michael B. Wakin%
\thanks{T. L. Vincent and M. B. Wakin are with the Department of Electrical Engineering, Colorado School of Mines, Golden, CO 80401.}
\thanks{This research was supported by the Office of Naval Research via grants N00014-23-1-2694 and N00014-25-1-2078 and the National Science Foundation via grant CCF-2106834.}}
\date{\today}

\newcommand{\meas}[1]{\widetilde{#1}}
\newcommand{\est}[1]{\widehat{#1}}
\begin{document}
\maketitle

\begin{abstract}
    An impulse response/equilibrium model (IREM) structure combines a linear convolution model with a nonlinear function that sets the current operating point via an equilibrium variable with integrator dynamics. This model structure is well suited for mildly nonlinear systems and in particular has been applied to battery fast charging control. This paper provides observability conditions for the IREM model structure and bounds on the prediction error. These conditions can be evaluated directly on the system impulse response.
\end{abstract}

\begin{IEEEkeywords}
Estimation, Predictive control, Nonlinear systems 
\end{IEEEkeywords}\section{Introduction}

This paper concerns estimation and prediction for dynamical systems. These are key tools for model predictive control (MPC), where control actions are determined via optimization. In order to form this optimization problem, it is necessary to predict the future behavior of a given system in response to a proposed future input. This problem is complicated by two factors: First, the behavior of a dynamical system depends on past inputs, including unmeasured disturbances. Second, the available sensors may be restricted so that not all states and/or signals of interest are measured. Both of these problems are solved by a process of state estimation, where the system state is defined to be the set of variables that, if known at the current time, are sufficient to predict the future outputs given future inputs. A classical method for state estimation uses a state-space representation of the system, and the current state is estimated based on measured past inputs and outputs using one of a large number of state estimation methods under the name of observer theory and/or Kalman Filtering~\cite{kailath2000linear,Boutat2021Observer,Gauthier2009Deterministic,brown2012introduction}. However, one requirement for implementation of these classical methods is the availability of a (preferably low order) state-space system representation, which may not be immediately available.  

The goal of this paper is to study prediction methods that can be implemented for mildly nonlinear systems using a small amount of data obtained from a system simulation, and without requiring a model identification or reduction step.  Of course, there are many data-based prediction methods available, including neural networks~\cite{PARLOS2000765,mohajerin2019multistep}, Koopman operator methods~\cite{Korda2020}, and other machine learning approaches, but although they can be applied to a large class of nonlinear systems, in general they require a large amount of data and significant processing of that data. We examine here an alternative that is appropriate for a smaller class of systems but can be implemented using a very limited set of experiments, with no additional processing of the data needed.

In this paper, we make use of a mixed input/output and state representation, where the states include only integrator dynamics. Specifically, the model structure is 
\begin{equation}
\begin{aligned}\label{eqn:irem}
y_{k} & = \sum_{\ell=-\infty}^{k}g_{k-\ell}u_{\ell} + \phi(x_{k})\\
x_{k} &= x_{k-1}+f(u_{k})
\end{aligned}
\end{equation}
where $u_{k}\in\mathbb{R}^{m}$ is the system input, $y_{k}\in\mathbb{R}^{q}$ is the system output, $g_{k}\in\mathbb{R}^{q\times m}$ is a sequence characterizing the system dynamics around an equilibrium (essentially the system impulse response modulo the integrator dynamics), $x_{k} \in\mathbb{R}^{s}$ is a state with integrator dynamics that tracks the system equilibrium, $f:\mathbb{R}^{m}\rightarrow \mathbb{R}^{s}$ maps the input to the equilibrium state, and $\phi:\mathbb{R}^{s}\rightarrow \mathbb{R}^{q}$ is a function that maps the equilibrium state to the system output at that equilibrium. This structure is called an impulse response/equilibrium model (IREM). 
For the problem of prediction, the sequence $g_{k}$ and  functions $\phi$ and $f$ are assumed to be known.

As a specific example of utilizing this model structure, consider a battery pack, where the input is the applied current in amps, and the outputs are the terminal voltage, along with other important variables relevant for ensuring the battery is operated without damage, such as spatial ion concentration or internal voltages (i.e., anode to electrolyte). Typically only applied current and terminal voltage can be measured. With zero current, the battery outputs relax to an equilibrium value determined by the state of charge. In the IREM model structure, we choose $x_{k}$ to be the state of charge, and $f(u_{k}) = \Gamma u_{k}$ where $\Gamma$ is the scale factor $T_{s}/(3600Q)$, $Q$ is the battery capacity in Ah, and $T_{s}$ is the sample time in seconds. The function $\phi$ can be found via simulation, either by finding the equilibrium at several different states of charge and interpolating, or by simulating a continuous charge or discharge at a very small current and recording the outputs vs.\ state of charge. Finally $g_{k}$ can be found from the simulation of a pulse of duration $T_{s}$ and unit area or a unit step response. In either case, first $\phi(x_{k})$ is subtracted from the response. If the input was a pulse, $g_{k}$ is the resulting output sequence, while if the input was a step, $g_{k}$ is the first difference of the output sequence. In practice, one would do this simulation at several different states of charge, and then schedule $g_{k}$ on the value of $x_{k}$, however our analysis will be for the unscheduled model structure given in \eqref{eqn:irem}.  Kalman filtering has been a popular approach to the estimation of battery state of charge and other internal variables \cite{bib:P2004,bib:SW2006,bib:SATP2015}, but it requires significant effort in model reduction to obtain a suitable state-space model \cite{bib:LASP2014}. Battery estimation of both state of charge and other internal variables using IREM, with the model parameters obtained via simple step experiments, was demonstrated in \cite{evans2025}.

In what follows,  we discuss the process of performing prediction using the IREM model structure and establish bounds on the prediction error. Importantly, we do not assume that all outputs are measured, but that only a subset are available. While the results do not assume a particular method of choosing the estimate, they are most compatible with a moving horizon approach, whereby a window of data in the past is used to determine the current state and then predict output trajectories in the future (both measured and unmeasured). The main result provides a gain that bounds the prediction error over a window in the future in terms of the fit to measured data over the past window. This gain depends only on the model parameters, and it does not depend on a probabilistic model of noise and disturbances. Moving horizon estimation also provides a more natural approach to gain scheduling, since a different model can be used at each step.  

There is extensive literature on the stability of moving horizon estimation for general nonlinear systems, with some representative results including \cite{rao2003constrained,ji2015robust,muller2017nonlinear,schiller2023lyapunov}. These results establish the convergence of estimates under certain assumptions about the noise and disturbances as well as observability or detectability conditions on the nonlinear system. In this paper, we provide bounds on performance along with observability conditions that are tailored to the specific IREM model structure and are more easily verified. Because the bounds are expressed in terms of the relationship between past fit and future prediction, they are of a different form than those available in the existing literature. 

\section{Notation}

$|\cdot|$ is absolute value, and $\|\cdot\|$ denotes the Euclidean norm. Matrices and vectors may both be denoted by a lower case or capital letter, and will be defined in context. Given matrix $M$, let 
\[
\sigma_{\rm max}(M) = \max_{\|x\|=1}\|Mx\|
\]
be the maximum gain when mapping a unit vector. This is equal to the maximum singular value of $M$. Let
\[
\sigma_{\rm min}(M) = \min_{\|x\|=1}\|Mx\|
\]
be the minimum gain when mapping a unit vector. When $M$ is full column rank, this is given by the smallest singular value. If $M$ is not full column rank, this is zero. Given vector $x$, $\operatorname{diag}x$ is a diagonal matrix with the elements of $x$ along the diagonal. Let $\lambda \searrow 0$ denote convergence to zero from above. $A^{+}$ denotes the pseudo-inverse of matrix $A$.

\begin{lemma}
    \label{lem:pseduoinverse}
    The pseudo-inverse of matrix $A$ is given by
    \[
    A^{+} = \lim_{\lambda\searrow 0}(A^{T}A + \lambda I)^{-1}A^{T} = \lim_{\lambda \searrow 0} A^{T}(AA^{T}+\lambda I)^{-1}.
    \]
    
\end{lemma}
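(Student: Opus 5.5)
The plan is to prove the lemma with the singular value decomposition, reducing everything to scalar limits. Write $A = U\Sigma V^{T}$ with $U\in\mathbb{R}^{p\times p}$ and $V\in\mathbb{R}^{n\times n}$ orthogonal. Here $\Sigma\in\mathbb{R}^{p\times n}$ has the singular values $\sigma_1\ge\cdots\ge\sigma_r>0$ in its leading diagonal positions and zeros elsewhere. The standard definition of the pseudo-inverse is $A^{+}=V\Sigma^{+}U^{T}$, where $\Sigma^{+}\in\mathbb{R}^{n\times p}$ carries $1/\sigma_i$ for $i\le r$ and zeros elsewhere. Equivalently, $A^{+}$ is the unique matrix satisfying the four Penrose conditions; if that definition is preferred, I would verify them directly on $V\Sigma^{+}U^{T}$.

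First, for $\lambda>0$ both $A^{T}A+\lambda I$ and $AA^{T}+\lambda I$ are symmetric positive definite, so the inverses exist. Substituting the SVD into the first expression gives
\[
(A^{T}A+\lambda I)^{-1}A^{T} = V(\Sigma^{T}\Sigma+\lambda I)^{-1}V^{T}V\Sigma^{T}U^{T} = V(\Sigma^{T}\Sigma+\lambda I)^{-1}\Sigma^{T}U^{T}.
\]
The matrix $(\Sigma^{T}\Sigma+\lambda I)^{-1}\Sigma^{T}$ is $n\times p$ and has diagonal entries $\sigma_i/(\sigma_i^{2}+\lambda)$ for $i\le r$. All its other entries are zero: for $i>r$ the column of $\Sigma^{T}$ is zero, so the entry is $0/\lambda=0$.

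Second, take the limit. As $\lambda\searrow 0$, each entry $\sigma_i/(\sigma_i^{2}+\lambda)$ converges to $1/\sigma_i$, and the zero entries stay zero. Since $U$ and $V$ are fixed, the matrix converges entrywise to $V\Sigma^{+}U^{T}=A^{+}$.

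Third, the second expression follows the same way. We have $A^{T}(AA^{T}+\lambda I)^{-1}=V\Sigma^{T}(\Sigma\Sigma^{T}+\lambda I)^{-1}U^{T}$, and the middle factor has the same diagonal entries $\sigma_i/(\sigma_i^2+\lambda)$. Alternatively, one can use the push-through identity $(A^{T}A+\lambda I)^{-1}A^{T}=A^{T}(AA^{T}+\lambda I)^{-1}$. It follows from $A^{T}(AA^{T}+\lambda I)=(A^{T}A+\lambda I)A^{T}$ by multiplying on the left by $(A^{T}A+\lambda I)^{-1}$ and on the right by $(AA^{T}+\lambda I)^{-1}$. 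This shows the two expressions agree for every $\lambda>0$, not only in the limit.

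The only delicate point is the handling of the zero singular values and the rectangular shapes. I must check that entries with $\sigma_i=0$ give exactly zero for every $\lambda>0$, rather than an indeterminate form, so that the limit is well defined without any rank assumption. The calculation above confirms this, and nothing else seems to be an obstacle.
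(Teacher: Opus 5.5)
Your proof is correct and complete. The SVD reduces both regularized expressions to $V(\Sigma^{T}\Sigma+\lambda I)^{-1}\Sigma^{T}U^{T}$, whose diagonal entries $\sigma_i/(\sigma_i^2+\lambda)\to 1/\sigma_i$, while the entries tied to zero singular values vanish for every $\lambda>0$. That last fact is why no rank assumption is needed.

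The paper gives no proof of its own. It cites Golub and Van Loan and only remarks that the limits exist without full rank. Your computation is the standard argument behind that citation. The push-through identity is a useful addition, since it shows the two expressions agree for every $\lambda>0$, not just in the limit.
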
 Note that these limits exist even if $A$ is not full rank~\cite{golub2013matrix}. 
\begin{definition} Given matrices $A\in\mathbb{R}^{n\times n}$ and $C\in\mathbb{R}^{q\times n}$, $(A,C)$ is an observable pair if the matrix
\[
\mathcal{O} = \begin{bmatrix}C^{T} & (CA)^{T} & \cdots & (CA^{n-1})^{T}\end{bmatrix}^{T}
\]
is full column rank.
\end{definition}

\section{Linear Case}

Since estimation using impulse response models is not common, we discuss the linear part of the model in isolation first. As is well known, the input-output behavior of all discrete-time linear time-invariant systems can be represented as a convolution of the input with the impulse response of the system. Thus, the relationship between the input $u_{k}$ and the output $y_{k}$ is given by
\begin{equation}
\label{eqn:impulse}
y_{k}  = \sum_{\ell=-\infty}^{k}g_{k-\ell}u_{\ell},
\end{equation}
where $g_{k}$ is the impulse response. We will restrict ourselves to causal systems, so $g_{k}=0$ for $k<0$. Prediction using a convolution representation is utilized in the MPC method named Dynamic Matrix Control, however, it does not do so via a method equivalent to state estimation \cite{garcia1986quadratic,garcia1989model}. Instead, the prediction uses a finite length approximation to $g_k$, applies past inputs, and compares to past outputs to estimate a constant disturbance. This method requires that all outputs be measured, and of course, the system dynamics are approximated. 

Developing an estimation method that is not approximate and does not rely on knowing the infinite past requires the use of an appropriate notion of state. This is provided by Nerode equivalence, which is developed for linear time-invariant systems in \cite{kailath1980linear}. In this formulation, inputs over the past time that produce the same future output (with zero future input) are assigned to the same equivalence class, which is identified as a system state. The dimension of the equivalence class then becomes the dimension of the system. For linear systems in a convolution representation, the  system dimension also corresponds to a rank condition on a Toeplitz (or Hankel) matrix made up of the impulse response. Define the Toeplitz matrix
\[
 T_{j}^{\ell,m}(\left\{g_{k}\right\}) = \begin{bmatrix} g_{j+\ell-1} & g_{j+\ell} & \cdots & g_{j+\ell+m-2} \\
g_{j+\ell-2} & g_{j+\ell-1} &\cdots & g_{j+\ell+m-3} \\
\vdots  & \vdots & & \vdots \\
g_{j} & g_{j+1} & \cdots & g_{j+m-1}\end{bmatrix}.
\]
If \eqref{eqn:impulse} is $n$ dimensional, then for any integers $j\geq1$, $\ell\geq 1$ and $m\geq 1$,  ${\rm rank}\, T_{j}^{\ell,m}\leq n$. In \cite{vincent2023prediction} this fact was leveraged to develop a prediction method for systems with an impulse-response representation  based on a finite length equivalent input. That is, for any $r\geq n$ there exists a length $r$ sequence $u^{\rm e}_{k}$ such that  \eqref{eqn:impulse} is equivalent to
\begin{equation}\label{eqn:impulseequivalent}
y_{k}  = \sum_{\ell=-p}^{k}g_{k-\ell}u_{\ell} + \sum_{\ell=-p-r}^{-p-1}g_{k-\ell}u^{\rm e}_{\ell}
\end{equation}
for $k\geq -p$. The infinite length sequence $u_{k}$ for $k\leq -p$ is replaced by the equivalent finite length sequence $u^{\rm e}_{k}$ which is a member of the same equivalence class and drives the system to the same state at time $k=-p$. 

Let $\meas{y}\in\mathbb{R}^{q_{m}}$ be the subset of $y$ that is measured, and $\meas{g}_{k}$ be the rows of $g_{k}$ corresponding to $\meas{y}$. In order to perform prediction, the following process can be used. Without loss of generality, let the current time index be $k=0$.  Using recorded input/output data from $k=-p$ to $k=0$, solve for $u_k^{\rm e}$. Then use \eqref{eqn:impulseequivalent} with this $u_k^{\rm e}$ to find $y_{k}$ for $k>0$.

In order to present this process formally, matrix-vector notation will be used. Define the upper triangular Toeplitz matrix
\[
T_{p}(\{g_{k}\}) = \begin{bmatrix} g_{0} & g_{1} & \cdots & g_{p-1}\\
0 & g_{0} & \cdots & g_{p-2}\\
\vdots & \vdots & \ddots & \vdots \\
0& 0 & \cdots & g_{0} \end{bmatrix},
\]
and the vectors
\begin{align*}
U_{\rm p} & =  \begin{bmatrix}  u_{0}^{T} &  u_{-1}^{T} & \cdots &  u_{-p+1}^{T}\end{bmatrix}^{T}, \\
\meas{Y}_{\rm p}&=\begin{bmatrix}   \meas{y}_{0}^{T} & \meas{y}_{-1}^{T} & \cdots & \meas{y}_{-p+1}^{T}\end{bmatrix}^{T}, \\
U_{\rm f} & =  \begin{bmatrix}  u_{p_{f}}^{T} &  u_{p_{f}-1}^{T} & \cdots &  u_{1}^{T}\end{bmatrix}^{T}, \\
Y_{\rm f}&=\begin{bmatrix}   y_{p_{f}}^{T} & y_{p_{f}-1}^{T} & \cdots & y_{1}^{T}\end{bmatrix}^{T}.
\end{align*}
The subscript p denotes past, recorded data, while the subscript f denotes data in the future. The window size in the past is given by $p$ while the window size in the future is $p_f$. 

The following result establishes that the estimation process described above is valid. To state the required conditions on the system, we first relate them to observability of a state-space representation of the system, although later this will be expressed entirely in terms of the impulse response. When $g_{k}$ is the impulse response of an $n$-dimensional system, there exists an $n$-dimensional state-space representation, which can be obtained from the impulse response via the Ho-Kalman algorithm \cite{ho1966effective}. Let the resulting state-space representation be
\begin{equation}\label{eqn:ss}
\begin{aligned}
x_{k+1} &= Ax_{k}+ Bu_{k},\\
\begin{bmatrix}\meas{y}_{k}\\\bar{y}_{k}\end{bmatrix} &= \begin{bmatrix}\meas{C}\\\bar{C}\end{bmatrix}x_{k}+\begin{bmatrix}\meas{D}\\\bar{D}\end{bmatrix}u_{k},
\end{aligned}
\end{equation}
and note that the impulse response $\meas{g}_{k} = \meas{C}A^{k-1}B$ for $k\geq 1$ and $g_{k} = CA^{k-1}B$ for $k\geq 1$ where $C = \begin{bmatrix}\meas{C}^{T} & \bar{C}^{T}\end{bmatrix}^{T}$. 

\begin{theorem} \cite{vincent2023prediction}\label{thm:result2} Given a system \eqref{eqn:impulse} with $g_{k}$ $n$ dimensional, so that there exists minimal state-space representation \eqref{eqn:ss}, suppose $(A,\meas{C})$ is an observable pair, and $p\geq n$. Let $\meas{Y}_{\rm p}$, $Y_{\rm f}$ be generated by \eqref{eqn:impulse} with corresponding $U_{\rm p}$, $U_{\rm f}$.  Choose  $r\geq n$. Then there exists $ U^{\rm e}\in\mathbb{R}^{mr}$ that satisfies
\begin{equation}\label{eqn:past}
 \meas{Y}_{\rm p} = T_{p}(\{\meas{g}_{k}\}) U_{\rm p} + T_{1}^{p,r}(\{\meas{g}_{k}\}) U^{\rm e}.
\end{equation}
In addition, any $ U^{\rm e}$ that satisfies \eqref{eqn:past} will also satisfy
\begin{equation}\label{eqn:future}
Y_{\rm f} = T_{p_{f}}(\{g_{k}\}) U_{\rm f} + T_{1}^{p_{f},p}(\{g_{k}\}) U_{\rm p}
+T_{p+1}^{p_{f},r}(\{g_{k}\}) U^{\rm e}.
\end{equation}
\end{theorem}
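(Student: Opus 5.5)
Both claims will be proved by translating everything into the minimal state-space representation \eqref{eqn:ss} and using the factorization of the Toeplitz blocks as products of observability and controllability matrices. Write $\meas{\mathcal{O}}_p$ for the stacked matrix with block rows $\meas{C}A^{p-1},\dots,\meas{C}A,\meas{C}$, ordered to match the ordering of $\meas{Y}_{\rm p}$ (newest time first). Write $\mathcal{O}_{p_f}$ for the analogous matrix built from $C$ with $p_f$ block rows, and $\mathcal{C}_r=[B\ AB\ \cdots\ A^{r-1}B]$. Since $g_k=CA^{k-1}B$ for $k\ge1$, the $(i,j)$ block of $T_j^{\ell,m}$ is $g_{j+\ell-i+(j'-1)}$ with column index $j'$. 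This gives the factorizations
\[
T_{1}^{p,r}(\{\meas{g}_k\})=\meas{\mathcal{O}}_p\,\mathcal{C}_r,\qquad
T_{p+1}^{p_f,r}(\{g_k\})=\mathcal{O}_{p_f}A^{p}\mathcal{C}_r,
\]
and similarly $T_1^{p_f,p}(\{g_k\})=\mathcal{O}_{p_f}\mathcal{C}_p$. The first step is to check these index identities carefully, including the column ordering of $U^{\rm e}$: the entries are ordered newest-first, $u^{\rm e}_{-p},u^{\rm e}_{-p-1},\dots$. This bookkeeping is routine but is where sign and ordering errors could creep in.

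Next, using the state $x_{-p+1}$ of the true system at the start of the window, the true outputs satisfy
\[
\meas{Y}_{\rm p}=T_p(\{\meas{g}_k\})U_{\rm p}+\meas{\mathcal{O}}_p\,x_{-p+1}.
\]
Here the $D$-terms are contained in $g_0$ and live in $T_p$. For existence, minimality of \eqref{eqn:ss} gives controllability, so $\mathcal{C}_r$ has rank $n$ for $r\ge n$. Hence there is some $U^{\rm e}$ with $\mathcal{C}_rU^{\rm e}=x_{-p+1}$, and this $U^{\rm e}$ satisfies \eqref{eqn:past}.

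For uniqueness of the induced state, suppose $U^{\rm e}$ satisfies \eqref{eqn:past}. Subtracting the two expressions gives $\meas{\mathcal{O}}_p(\mathcal{C}_rU^{\rm e}-x_{-p+1})=0$. Because $(A,\meas{C})$ is observable and $p\ge n$, $\meas{\mathcal{O}}_p$ has full column rank, so $\mathcal{C}_rU^{\rm e}=x_{-p+1}$. This is the key step: any consistent equivalent input reproduces the true initial state exactly.

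Finally, propagate forward. The state at time $1$ is $x_1=A^{p}x_{-p+1}+\mathcal{C}_pU_{\rm p}$, again with newest-first ordering. The future outputs are then
\[
Y_{\rm f}=T_{p_f}(\{g_k\})U_{\rm f}+\mathcal{O}_{p_f}x_1.
\]
Substituting $x_{-p+1}=\mathcal{C}_rU^{\rm e}$ and applying the factorizations above yields \eqref{eqn:future}.

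The only real obstacle is the index verification for the Toeplitz factorizations. I would check it on a single generic block entry: row $i$ and column $j'$ of $T_{p+1}^{p_f,r}$ is $g_{p+p_f-i+j'}$. This equals $CA^{p_f-i}\,A^{p}\,A^{j'-1}B$, which matches the product $\mathcal{O}_{p_f}A^{p}\mathcal{C}_r$.
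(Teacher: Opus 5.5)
Your proof is correct. You factor $\meas{T}_{1}^{p,r}=\mathcal{O}^{p}(A,\meas{C})\mathcal{C}_{r}$ and $T_{p+1}^{p_{f},r}=\mathcal{O}^{p_{f}}(A,C)A^{p}\mathcal{C}_{r}$, get existence from reachability of the minimal realization, force $\mathcal{C}_{r}U^{\rm e}=x_{-p+1}$ from injectivity of $\mathcal{O}^{p}(A,\meas{C})$ when $p\geq n$, and then propagate to $x_{1}$. The paper does not reprove this theorem and only quotes it from \cite{vincent2023prediction}, but it uses exactly these observability/controllability factorizations in the proofs of Theorem~\ref{thm:ssobservability} and Lemma~\ref{lem:impulserepresentation}, so yours is the same route. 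The one unstated point is that $x_{-p+1}=\sum_{\ell\leq -p}A^{-p-\ell}Bu_{\ell}$ must be well defined; this follows from convergence of \eqref{eqn:impulse} together with observability of the full pair $(A,C)$.
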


Note that \eqref{eqn:past} and \eqref{eqn:future} are both equivalent to \eqref{eqn:impulseequivalent} for different ranges of $k$, where convolution is replaced by multiplication by a Toeplitz matrix.  Thus, the estimation process can be restated as follows: use \eqref{eqn:past} to solve for $U^{\rm e}$, then use the resulting $U^{\rm e}$ in \eqref{eqn:future} to predict the future. In \cite{vincent2023prediction}, an estimation process that models the noise and disturbances as Gaussian sequences was used to derive a maximum likelihood estimate for $U^{\rm e}$. In this paper, we derive bounds on estimation error for estimates obtained by any method.

There are two main issues to be addressed.

\begin{itemize}
\item Only a subset of the outputs of interest are included in  \eqref{eqn:past}, yet the prediction \eqref{eqn:future} is of all outputs, thus the requirement that $(A,\meas{C})$ is an observable pair. Is there a condition that can be expressed in terms of the impulse response alone that will ensure that the behavior of all outputs can be inferred from a given subset of outputs?
\item An exact solution to  \eqref{eqn:past} may not be obtained. What effect will this have on the predicted outputs? 
\end{itemize}
One way an exact solution to \eqref{eqn:past} may not occur is if $mr<n$, so that the dimension of $U^{\rm e}$ is less than the dimension of the equivalence class. This might be desired as a method of model reduction if $n$ is very large. Alternatively, the measured past outputs may be corrupted by measurement noise, which will also imply that an exact solution to  \eqref{eqn:past} is not obtained, as the left hand side will be  $\est{Y}_{p}+E$ where $E$ is due to the difference between the measured output and actual output.  

Suppose an estimate of $U^{\rm e}$ is obtained, denoted $\est{U}^{\rm e}$. The following result bounds the error in using this estimate to predict $Y_{\rm f}$ in terms of how well it fits the past data $\meas{Y}_{\rm p}$, and a gain that depends on the alignment of the null spaces of $T_{1}^{p,r}(\{\meas{g}_{k}\})$ and $T_{p+1}^{p_{f},r}(\{g_{k}\})$. 
Given the estimated equivalent input, $\est{U}^{\rm e}$, let the estimate of past measured outputs,  $\est{\meas{Y}}_{\rm p}$ and predictions of future $\est{Y}_{\rm f}$ be given by
\begin{equation}\label{eqn:estimate}
 \est{\meas{Y}}_{\rm p} = T_{p}(\{\meas{g}_{k}\}) U_{\rm p} + T_{1}^{p,r}(\{\meas{g}_{k}\}) \est{U}^{\rm e}
\end{equation}
and
\begin{equation}\label{eqn:predict}
\est{Y}_{\rm f} = T_{p_{f}}(\{g_{k}\}) U_{\rm f} + T_{1}^{p_{f},p}(\{g_{k}\}) U_{\rm p}
+T_{p+1}^{p_{f},r}(\{g_{k}\}) \est{U}^{\rm e}.
\end{equation}
In order to reduce the notational burden of what follows, define (with some abuse of notation) $\meas{T}_{1}^{p,r}:=T_{1}^{p,r}(\{\meas{g}_{k}\})$ and $T_{p+1}^{p_{f},r}:=T_{p+1}^{p_{f},r}(\{g_{k}\})$, and similarly for $T_{p},\meas{T}_{p}$. That is, we drop the argument, but denote variables created with $\meas{g}_{k}$ with a tilde, and those created with $g_{k}$ without a tilde.
\begin{theorem}\label{thm:maintheoremlinear}
Given \eqref{eqn:past}, \eqref{eqn:future}, \eqref{eqn:estimate}, and \eqref{eqn:predict}, with $U_{\rm p}$ and $U_{\rm f}$ fixed. If ${\rm Null}(\meas{T}_{1}^{p,r})\subseteq{\rm Null}\left(T_{p+1}^{p_{f},r}\right)$, then
\[
\|\est{Y}_{\rm f} - Y_{\rm f}\| \leq \sigma_{\rm max}\,(T_{p+1}^{p_{f},r}(\meas{T}_{1}^{p,r})^{+}) \|\est{\meas{Y}}_{\rm p} - \meas{Y}_{\rm p}\|.
\]
\end{theorem}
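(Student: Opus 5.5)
Subtracting \eqref{eqn:past} from \eqref{eqn:estimate} and \eqref{eqn:future} from \eqref{eqn:predict}, the terms involving $U_{\rm p}$ and $U_{\rm f}$ cancel, since these are fixed and identical in both expressions. Writing $\Delta = \est{U}^{\rm e} - U^{\rm e}$, we are left with
\[
\est{\meas{Y}}_{\rm p} - \meas{Y}_{\rm p} = \meas{T}_{1}^{p,r}\Delta, \qquad \est{Y}_{\rm f} - Y_{\rm f} = T_{p+1}^{p_{f},r}\Delta .
\]
The whole argument therefore reduces to a statement about two linear maps applied to the same vector $\Delta$. We need to show that $\|T_{p+1}^{p_{f},r}\Delta\| \leq \sigma_{\rm max}(T_{p+1}^{p_{f},r}(\meas{T}_{1}^{p,r})^{+})\,\|\meas{T}_{1}^{p,r}\Delta\|$.

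We decompose $\Delta = \Delta_{\rm r} + \Delta_{\rm n}$, where $\Delta_{\rm n}\in{\rm Null}(\meas{T}_{1}^{p,r})$ and $\Delta_{\rm r}$ lies in the orthogonal complement, i.e.\ the row space of $\meas{T}_{1}^{p,r}$. By the null space inclusion hypothesis, $T_{p+1}^{p_{f},r}\Delta_{\rm n}=0$, so $T_{p+1}^{p_{f},r}\Delta = T_{p+1}^{p_{f},r}\Delta_{\rm r}$. Trivially, $\meas{T}_{1}^{p,r}\Delta = \meas{T}_{1}^{p,r}\Delta_{\rm r}$ as well.

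The key step is the identity $(\meas{T}_{1}^{p,r})^{+}\meas{T}_{1}^{p,r}\Delta_{\rm r} = \Delta_{\rm r}$. This holds because $A^{+}A$ is the orthogonal projector onto the row space of $A$. One can see this from Lemma~\ref{lem:pseduoinverse}: writing $A = V\Sigma W^{T}$ (SVD), $(A^{T}A+\lambda I)^{-1}A^{T}A$ has eigenvalues $\sigma_i^2/(\sigma_i^2+\lambda)\to 1$ on the row space and $0$ on the null space.

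Hence
\[
\est{Y}_{\rm f} - Y_{\rm f} = T_{p+1}^{p_{f},r}(\meas{T}_{1}^{p,r})^{+}\,\meas{T}_{1}^{p,r}\Delta = T_{p+1}^{p_{f},r}(\meas{T}_{1}^{p,r})^{+}(\est{\meas{Y}}_{\rm p} - \meas{Y}_{\rm p}),
\]
and the bound follows from the definition of $\sigma_{\rm max}$ as the induced 2-norm.

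The only step requiring care is this projector property of the pseudo-inverse, together with correctly using the null space hypothesis to discard $\Delta_{\rm n}$. Everything else is direct cancellation. One should also note that the existence of some $U^{\rm e}$ satisfying \eqref{eqn:past} is assumed in the statement (``Given \eqref{eqn:past}, \eqref{eqn:future}''), so no appeal to the observability conditions of Theorem~\ref{thm:result2} is needed here.
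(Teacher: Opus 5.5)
Your proof is correct, and its skeleton matches the paper's. You subtract to get $\est{\meas{Y}}_{\rm p}-\meas{Y}_{\rm p}=\meas{T}_{1}^{p,r}\Delta$ and $\est{Y}_{\rm f}-Y_{\rm f}=T_{p+1}^{p_{f},r}\Delta$, show $\est{Y}_{\rm f}-Y_{\rm f}=T_{p+1}^{p_{f},r}(\meas{T}_{1}^{p,r})^{+}(\est{\meas{Y}}_{\rm p}-\meas{Y}_{\rm p})$, and bound by $\sigma_{\rm max}$.

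The difference is in how you obtain that identity. The paper invokes Lemma~\ref{lem:mainlemma} with $M_1=\meas{T}_{1}^{p,r}$, $M_2=T_{p+1}^{p_{f},r}$, $X=\Delta$, and $Z_1,Z_2$ the two error vectors. It rewrites the regularized normal equations as $X=(M_1^TM_1+\lambda I)^{-1}M_1^TZ_1+\lambda(M_1^TM_1+\lambda I)^{-1}X$, substitutes into $Z_2=M_2X$, and lets $\lambda\searrow 0$. The null-space inclusion is used there to kill the term $\lambda M_2(M_1^TM_1+\lambda I)^{-1}X$. You instead split $\Delta$ orthogonally and use that $(\meas{T}_{1}^{p,r})^{+}\meas{T}_{1}^{p,r}$ is the projector onto the row space. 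This gives $T_{p+1}^{p_{f},r}=T_{p+1}^{p_{f},r}(\meas{T}_{1}^{p,r})^{+}\meas{T}_{1}^{p,r}$ exactly, with no limit.

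Your route is more elementary and shows precisely where the hypothesis enters. In fact, the paper's assertion that $M_2(M_1^TM_1+\lambda I)^{-1}$ stays bounded as $\lambda\searrow 0$ is justified by exactly your decomposition. On the row space the inverse is bounded by the reciprocal of the smallest nonzero squared singular value, and $M_2$ annihilates the null space, where the inverse blows up.

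The paper's regularized formulation pays off mainly because the same computation is reused in Case~2 of the lemma for the nonlinear model. Your argument carries over just as easily there: $M_1X=Z_1-\Phi_1(w)$ gives $M_2X=M_2M_1^{+}(Z_1-\Phi_1(w))$ and $(I-M_1M_1^{+})(Z_1-\Phi_1(w))=0$ immediately.
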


The proof of this theorem is presented in Section \ref{sec:proofs}. This result can be used to provide answers to the questions above. The answer to the first question is given by the condition ${\rm Null}(\meas{T}_{1}^{p,r})\subseteq {\rm Null}(T_{p+1}^{p_{f},r})$. This can be verified in a variety of ways, including checking if $\lim_{\lambda\searrow 0}\meas{T}_{1}^{p,r}((T_{p+1}^{p_{f},r})^{T}T_{p+1}^{p_{f},r} + \lambda I)^{-1}$ is bounded. The answer to the second question is that prediction errors are bounded by a term that scales with the size of the error in the fit to past data, with the gain given by $\sigma_{\rm max}\,(T_{p+1}^{p_{f},r}(\meas{T}_{1}^{p,r})^{+})$. In particular, if $\est{\meas{Y}}_{\rm p} = \meas{Y}_{\rm p} + E$ because the past data was measured with noise, then $\|\est{Y}_{\rm f} - Y_{\rm f}\| \leq \sigma_{\rm max}\,(T_{p+1}^{p_{f},r}(\meas{T}_{1}^{p,r})^{+})\|E\|$.

The condition ${\rm Null}(\meas{T}_{1}^{p,r})\subseteq {\rm Null}(T_{p+1}^{p_{f},r})$ can be connected to the observability condition on matrices $(A,\meas{C}$) from the equivalent state-space representation. Thus, although Theorem~\ref{thm:result2} is stated using an observability condition based on an equivalent state-space representation, this condition can be verified directly from the impulse response.

\begin{theorem}\label{thm:ssobservability}
    Suppose a system with impulse response $g_{k}$ has state-space representation \eqref{eqn:ss}. Then there exists $p$ such that ${\rm Null}(\meas{T}_{1}^{p,r})\subseteq {\rm Null}(T_{p+1}^{p_{f},r})$ if $(A,\meas{C})$ is an observable pair.
 \end{theorem}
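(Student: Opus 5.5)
Since $g_k = CA^{k-1}B$ for $k\geq 1$, I will write both Toeplitz matrices as products of an observability-type matrix and a common reachability-type matrix. Reading off the block rows of $T_{j}^{\ell,m}$, row $i$ (counted from the bottom, $i=0,\dots,\ell-1$) has entries $g_{j+i},\dots,g_{j+i+m-1}$, that is, $CA^{j+i-1}\,[\,B\ \ AB\ \cdots\ A^{m-1}B\,]$. Define $\mathcal{R}_r=[\,B\ \ AB\ \cdots\ A^{r-1}B\,]$. Then
\[
\meas{T}_{1}^{p,r} = \meas{\mathcal{O}}_p\,\mathcal{R}_r,\qquad T_{p+1}^{p_f,r} = \mathcal{O}_{p_f}A^{p}\,\mathcal{R}_r,
\]
where $\meas{\mathcal{O}}_p$ stacks $\meas{C}A^{p-1},\dots,\meas{C}A,\meas{C}$ (top to bottom, matching the row ordering) and $\mathcal{O}_{p_f}$ stacks $CA^{p_f-1},\dots,C$. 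I will verify this factorization carefully, in particular the exponent offset: row $i$ of $T_{p+1}^{p_f,r}$ begins at $g_{p+1+i}=CA^{p+i}B$. A row permutation of the stacked blocks does not affect null spaces.

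Next I will choose $p\geq n$. Observability of $(A,\meas{C})$ means the first $n$ blocks of $\meas{\mathcal{O}}_p$ already have full column rank, so $\meas{\mathcal{O}}_p$ is injective. Hence $\meas{T}_{1}^{p,r}v=0$ forces $\meas{\mathcal{O}}_p\mathcal{R}_r v=0$, which gives $\mathcal{R}_r v=0$. Then $T_{p+1}^{p_f,r}v=\mathcal{O}_{p_f}A^p\mathcal{R}_r v=0$, so ${\rm Null}(\meas{T}_{1}^{p,r})\subseteq{\rm Null}(T_{p+1}^{p_f,r})$. This holds for every $r$ and every $p_f$, and in fact for any $p\geq n$, which is stronger than the existence claim.

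The only real obstacle is bookkeeping. The Toeplitz definition lists block rows with the highest index on top, and the paper's $T_{j}^{\ell,m}$ convention must be checked to see that the column index increases with the power of $A$ applied to $B$. Beyond that, the argument is the standard factorization of a Hankel/Toeplitz matrix into observability and controllability factors. Minimality is not needed; only observability of $(A,\meas{C})$ is used.
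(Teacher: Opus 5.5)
Your proof is correct and matches the paper's argument. Like the paper, you factor $\meas{T}_{1}^{p,r}=\mathcal{O}^{p}(A,\meas{C})\,\mathcal{C}$ and $T_{p+1}^{p_f,r}=\mathcal{O}^{p_f}(A,C)A^{p}\mathcal{C}$, use observability to make the left factor injective once $p\ge n$, and conclude ${\rm Null}(\meas{T}_{1}^{p,r})={\rm Null}(\mathcal{C})\subseteq{\rm Null}(T_{p+1}^{p_f,r})$. One wording fix: the blocks that give full column rank are $\meas{C},\meas{C}A,\dots,\meas{C}A^{n-1}$. In your top-to-bottom stacking these are the bottom $n$ blocks, not the first $n$ from the top, and the top ones can lose rank when $A$ is singular and $p>n$.
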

Proof:
By assumption, the impulse response coefficients are given by $g_{k} = CA^{k-1}B$ and $\meas{g}_{k} = \meas{C}A^{k-1}B$. Let $\mathcal{C} = \begin{bmatrix} B & AB & A^2B &\cdots & A^{r-1}B \end{bmatrix}$ and define 
\[ 
\mathcal{O}^{p}(A,C) := \begin{bmatrix} (CA^{p-1})^{T} & (CA^{p-2})^{T} & \cdots & C^{T} \end{bmatrix}^{T}.
\]
We can write
\begin{align*}
\meas{T}_{1}^{p,r} &= \mathcal{O}^{p}(A,\meas{C})\mathcal{C}, \\
T_{p+1}^{p_{f},r}& =\mathcal{O}^{p_{f}}(A,C)A^{p}\mathcal{C}.
\end{align*}
If $(A,\meas{C})$ is an observable pair, for $p$  sufficiently large, $\mathcal{O}^{p}(A,\meas{C})$ is full column rank.
Thus  ${\rm Null}(\meas{T}_{1}^{p,r}) = {\rm Null}(\mathcal{C})$ while clearly ${\rm Null}\left(T_{p+1}^{p_{f},r}\right) \supseteq {\rm Null}(\mathcal{C})$, proving the result.\hfill \qed\\

\section{Nonlinear Case}

Now we can consider the IREM model structure \eqref{eqn:irem}. Let $\meas{\phi}(x)$ be the elements of $\phi(x)$ corresponding to $\meas{y}$. Introduce the notation
\[
\Phi_{\rm p}(x,U_{\rm p}) = \begin{bmatrix} \meas{\phi}(x) \\ \meas{\phi}(x - f( u_{0})) \\ \vdots \\ \meas{\phi}\left(x - \sum_{i=-p+2}^{0}f(u_{i})\right)\end{bmatrix},
\]
and
\[
\Phi_{\rm f}(x,U_{\rm f})  = \begin{bmatrix} \phi\left(x +\sum_{i=1}^{p_{f}}f(u_{i})\right) \\ \phi\left(x + \sum_{i=1}^{p_{f}-1}f(u_{i})\right) \\ \vdots \\ \phi\left(x +  f(u_{1})\right)\end{bmatrix}.
\]
The past measured outputs are given by
\begin{equation}\label{eqn:altpast}
 \meas{Y}_{\rm p} = \meas{T}_{p} U_{\rm p} + \meas{T}_{1}^{p,r} U^{\rm e} + \Phi_{\rm p}(x_{0},U_{\rm p}),
\end{equation}
while future outputs are given by
\begin{equation}\label{eqn:altfuture}
Y_{\rm f} = T_{p_{f}} U_{\rm f} + T_{1}^{p_{f},p} U_{\rm p}
+T_{p+1}^{p_{f},r} U^{\rm e} + \Phi_{\rm f}(x_{0},U_{\rm f}).
\end{equation}
Suppose there exist estimates $\est{U}^{\rm e}$ and $\est{x}_{0}$, with predicted past and future outputs
\begin{align}\label{eqn:altestimate}
 \est{\meas{Y}}_{\rm p} &= \meas{T}_{p} U_{\rm p} + \meas{T}_{1}^{p,r} \est{U}^{\rm e} + \Phi_{\rm p}(\est{x}_{0},U_{\rm p}) \\
 \label{eqn:altpredict}
\est{Y}_{\rm f} &= T_{p_{f}} U_{\rm f} + T_{1}^{p_{f},p} U_{\rm p}
+T_{p+1}^{p_{f},r} \est{U}^{\rm e} + \Phi_{\rm f}(\est{x}_{0},U_{\rm f}). 
\end{align}

Results are obtained based on two different assumptions on the nonlinearity $\phi$. In one case, we make a strong assumption that $\phi$ shifted to the origin at all points in its domain is sector bounded. A second result applies to a larger class of nonlinear functions, called $L$-smooth, but imposes additional requirements on the input sequence $u_{k}$, which is expressed in terms of the trajectory of the equilibrium state $x_{k}$.

\begin{definition}    $\phi:\mathbb{R}^{s}\rightarrow \mathbb{R}^{q}$ is a {\em sector bounded function with $\gamma$-approximation $B \in \mathbb{R}^{q\times s}$} if there exists a matrix-valued function $R(x,x_{0}): \mathbb{R}^{s} \times \mathbb{R}^{s} \rightarrow \mathbb{R}^{q \times s}$ with $\| R(x,x_{0}) \| \leq \gamma$ such that $\phi(x)-\phi(x_{0})= B(x-x_{0})+R(x,x_{0})(x-x_{0})$ for all $x,x_0 \in \mathbb{R}^s$. 
\end{definition}

\begin{definition} The vector valued function 
$\phi:\mathbb{R}^{s}\rightarrow \mathbb{R}^{q}$ is said to be {\em $L$-smooth} if its Jacobian matrix is uniformly Lipschitz with constant $L$, i.e., if 
\begin{equation}
\| J(x) - J(x_0) \| \le L \| x-x_0 \|
\label{eq:Lsmooth}
\end{equation}
for all $x,x_0 \in \mathbb{R}^{s}$. Here, $J(x)$ is the $q \times s$ Jacobian of $\phi$ evaluated at $x$, and $\| J(x) - J(x_0) \|$ denotes the operator norm of the difference $J(x) - J(x_0)$.
\end{definition}

A periodic band-limited signal can be expressed as a sum of complex exponentials with exponents $j\omega_{i} k$ with $|\omega_{i}|<\omega_{0}$ for some $\omega_{0}$. This can also be expressed as a sum of exponentials $(e^{j\omega_{i}})^{k}$ with $|e^{j\omega_{i}}-1|<|e^{j\omega_{0}} -1|$. In this paper, band-limited signals are defined similarly, but with a representation by more general basis terms.  

\begin{definition}
    A signal $x_{k}$ is said to be {\em band-limited} if it can be expressed as 
    \[
    x_{k} = \sum_{i=1}^{m}\sum_{\ell=0}^{\ell_{i}}K_{i,\ell}k^{\ell}\lambda_{i}^k
    \]
    with $\lambda_{i}\in\mathbb{C}$,  $|\lambda_{i}-1|\leq {\rho}$, $\forall i$.
\end{definition}

The final ingredient necessary to state the main result is a particular factorization of the nonlinear terms. 

\begin{lemma}
\label{lem:MPsi}
Consider the function $\meas{\phi}:\mathbb{R}^{s}\rightarrow \mathbb{R}^{q_{m}}$ with $U_{\rm p}$ and $U_{\rm f}$ fixed. Let $M(x,x_{0}): \mathbb{R}^{s} \times \mathbb{R}^{s} \rightarrow \mathbb{R}^{q_{m} \times s}$ be such that
\[
\meas{\phi}(x)-\meas{\phi}(x_{0}) = M(x,x_{0})(x-x_{0}),
\]
with one such $M(x,x_{0})$ given by
\[
M(x,x_{0}) = \frac{(\meas{\phi}(x) - \meas{\phi}(x_{0}))(x-x_{0})^{T}}{\|x-x_{0}\|^{2}}.
\]
Then we can write 
\begin{align*}
\Phi_{\rm p}(x,U_{\rm p}) - \Phi_{\rm p}(x_{0},U_{\rm p})&=\Psi_{\rm p}(x,x_{0},U_{\rm p})(x-x_{0})
\end{align*}
where
\[
\resizebox{\columnwidth}{!}{$\Psi_{\rm p}(x,x_{0},U_{\rm p}) = \begin{bmatrix}
    M(x,x_0) \\
    M(x - f( u_{0}),x_{0} - f( u_{0})) \\
    \vdots\\\displaystyle
    M\left(x - \sum_{i=-p+2}^{0}f(u_{i}),x_{0} - \sum_{i=-p+2}^{0}f(u_{i})\right)  
    \end{bmatrix},$}.
\]
Using a similar decomposition of $\phi$, we can write
\[
\Phi_{\rm f}(x,U_{\rm f}) - \Phi_{\rm f}(x_{0},U_{\rm f}) = \Psi_{\rm f}(x,x_0,U_{\rm f})(x-x_{0}).
\]
\end{lemma}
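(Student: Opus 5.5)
The plan is to verify the factorization first for a single pair of points, and then lift it block by block to $\Phi_{\rm p}$ and $\Phi_{\rm f}$ using the fact that all blocks are evaluated at points shifted by the same amount.

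\textbf{Scalar identity.} For $x\neq x_{0}$, multiplying the proposed $M(x,x_{0})$ by $(x-x_{0})$ gives
\[
\frac{(\meas{\phi}(x)-\meas{\phi}(x_{0}))(x-x_{0})^{T}(x-x_{0})}{\|x-x_{0}\|^{2}} = \meas{\phi}(x)-\meas{\phi}(x_{0}),
\]
since $(x-x_{0})^{T}(x-x_{0})=\|x-x_{0}\|^{2}$. The case $x=x_{0}$ is the only degenerate one. There both sides of the desired identity vanish for any matrix, so we take $M(x_{0},x_{0})$ to be any fixed matrix, for instance $0$ or the Jacobian when it exists. This settles existence of $M$, with the stated choice as one explicit example.

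\textbf{Past window.} Denote the $j$-th block of $\Phi_{\rm p}$ by $\meas{\phi}(x-c_{j})$, where $c_{j}=\sum_{i=-j+2}^{0}f(u_{i})$ (with $c_{1}=0$). Because $U_{\rm p}$ is fixed, $c_{j}$ does not depend on $x$ or $x_{0}$. The key observation is that
\[
(x-c_{j})-(x_{0}-c_{j}) = x-x_{0}.
\]
Applying the scalar identity at the points $x-c_{j}$ and $x_{0}-c_{j}$ then gives
\[
\meas{\phi}(x-c_{j})-\meas{\phi}(x_{0}-c_{j}) = M(x-c_{j},x_{0}-c_{j})(x-x_{0}).
\]
Stacking these equations over $j=1,\dots,p$ yields exactly $\Psi_{\rm p}(x,x_{0},U_{\rm p})(x-x_{0})$, where $\Psi_{\rm p}$ has the displayed block rows.

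\textbf{Future window.} We repeat the argument with $\phi$ in place of $\meas{\phi}$ and define the analogous $M_{\phi}(x,x_{0}) = (\phi(x)-\phi(x_{0}))(x-x_{0})^{T}/\|x-x_{0}\|^{2}$. The shifts are now $+\sum_{i=1}^{j}f(u_{i})$, again independent of $x$ and $x_{0}$. Stacking gives $\Psi_{\rm f}$.

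The only point needing care is the degenerate case $x=x_{0}$. It is harmless because, when $x=x_{0}$, every shifted pair coincides as well, so every block equation reads $0=0$ regardless of the value chosen for $M$. Apart from this, no real obstacle is expected; the argument is pure bookkeeping of the fixed input shifts.
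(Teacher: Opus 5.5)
Your proposal is correct and is the ``straightforward substitution'' the paper has in mind; the paper omits the proof. Each block of $\Phi_{\rm p}$ and $\Phi_{\rm f}$ evaluates $\meas{\phi}$ or $\phi$ at $x$ and $x_{0}$ shifted by the same input-dependent constant, so the argument difference is always $x-x_{0}$, and stacking the per-block identities gives $\Psi_{\rm p}$ and $\Psi_{\rm f}$; your handling of the degenerate case $x=x_{0}$, where the displayed formula for $M$ is undefined, is a small but worthwhile addition.
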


\begin{proof}
The proof is by straightforward substitution and is omitted.
\end{proof}

We can now state the main result of the paper.

\begin{theorem}\label{thm:mainresult}
Given \eqref{eqn:altpast}, \eqref{eqn:altfuture}, \eqref{eqn:altestimate}, and \eqref{eqn:altpredict}, with $U_{\rm p}$ and $U_{\rm f}$ fixed, $g_{k}$ the impulse response of an $n$ dimensional system, $p > n$, and ${\rm Null}(\meas{T}_{1}^{p,r})\subseteq {\rm Null}(T_{p+1}^{p_{f},r})$, let $\Psi_{\rm p}$ and $\Psi_{\rm f}$ be defined as in Lemma \ref{lem:MPsi}. Let $L_{1} = \sup_{x}\|\Psi_{\rm p}(x,x_{0},U_{\rm p})\|$, $L_{2} = \sup_{x}\|\Psi_{\rm f}(x,x_{0},U_{\rm f})\|$, and $\alpha =  \inf_{x}\sigma_{\rm min}((I-\meas{T}_{1}^{p,r}(\meas{T}_{1}^{p,r})^{+})\Psi_{\rm p}(x,x_{0},U_{\rm p}))$. The  bound 
\begin{equation}\label{eqn:mainbound}
\resizebox{\columnwidth}{!}{$\|\est{Y}_{\rm f} - Y_{\rm f}\|\leq \left(\left(1+\frac{L_{1}}{\alpha}\right)\sigma_{\rm max}(T_{p+1}^{p_{f},r}(\meas{T}_{1}^{p,r})^{+})+\frac{L_{2}}{\alpha}\right) \|\est{\meas{Y}}_{\rm p} - \meas{Y}_{\rm p}\|$}
\end{equation}
can be established under either of the following conditions.
\begin{itemize}
    \item Suppose $\meas{T}_{2}^{n,n}(\meas{T}_{1}^{n,n})^{+}$ has eigenvalues satisfying $\lambda \neq 1$. Then there exists $\gamma_{0} >0$ such that for any sector bounded function $\meas{\phi}(x)$ with a $\gamma$-approximation $B$ where $\gamma  < \gamma_{0}$ and $B$ has full column rank, the bound~\eqref{eqn:mainbound} holds with $\alpha>0$.
    \item Suppose $\meas{T}_{2}^{n,n}(\meas{T}_{1}^{n,n})^{+}$ has eigenvalues each satisfying $|\lambda - 1|>\rho$. Let $x_{k}$ be the trajectory of the equilibrium state. If $\meas{\phi}$ is $L$-smooth and the Jacobian sequence $J(x_{k})$ is band-limited with bound $\rho$ and full column rank for all~$k$, then there exists $\epsilon_{0}$ such that when $\|\est{x}_{0}-x_{0}\|<\epsilon_{0}$ the bound~\eqref{eqn:mainbound} holds with $\alpha>0$.
\end{itemize}
\end{theorem}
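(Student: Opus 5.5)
The plan is to split the argument in two. First, I will show that the bound \eqref{eqn:mainbound} follows deterministically once $\alpha>0$. Second, I will show that each bulleted condition guarantees $\alpha>0$.

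\textbf{Deterministic bound.} Set $\Delta U=\est{U}^{\rm e}-U^{\rm e}$, $\Delta x=\est{x}_{0}-x_{0}$ and $e_{\rm p}=\est{\meas{Y}}_{\rm p}-\meas{Y}_{\rm p}$. Subtracting \eqref{eqn:altpast} from \eqref{eqn:altestimate} and applying Lemma~\ref{lem:MPsi} gives
\[
e_{\rm p}=\meas{T}_{1}^{p,r}\Delta U+\Psi_{\rm p}\Delta x ,
\]
with $\Psi_{\rm p}$ evaluated at $(\est{x}_{0},x_{0},U_{\rm p})$. Let $P=I-\meas{T}_{1}^{p,r}(\meas{T}_{1}^{p,r})^{+}$, the orthogonal projector onto the complement of the range of $\meas{T}_{1}^{p,r}$. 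Applying $P$ gives $Pe_{\rm p}=P\Psi_{\rm p}\Delta x$. Since $\|P\|\le 1$, this yields $\|\Delta x\|\le\|e_{\rm p}\|/\alpha$. Substituting back,
\[
\|\meas{T}_{1}^{p,r}\Delta U\|\le\|e_{\rm p}\|+L_{1}\|\Delta x\|\le(1+L_{1}/\alpha)\|e_{\rm p}\| .
\]
For the future error, subtracting \eqref{eqn:altfuture} from \eqref{eqn:altpredict} gives
\[
e_{\rm f}=T_{p+1}^{p_{f},r}\Delta U+\Psi_{\rm f}\Delta x .
\]
The vector $\Delta U-(\meas{T}_{1}^{p,r})^{+}\meas{T}_{1}^{p,r}\Delta U$ lies in ${\rm Null}(\meas{T}_{1}^{p,r})\subseteq{\rm Null}(T_{p+1}^{p_{f},r})$. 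Hence $T_{p+1}^{p_{f},r}\Delta U=T_{p+1}^{p_{f},r}(\meas{T}_{1}^{p,r})^{+}\,\meas{T}_{1}^{p,r}\Delta U$, which is the same mechanism as in Theorem~\ref{thm:maintheoremlinear}. Combining this identity with the bounds on $\|\meas{T}_{1}^{p,r}\Delta U\|$ and $\|\Delta x\|$, and using $\|\Psi_{\rm f}\|\le L_{2}$, gives exactly \eqref{eqn:mainbound}.

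\textbf{Positivity of $\alpha$: structural fact.} Through a minimal realization \eqref{eqn:ss}, ${\rm range}(\meas{T}_{1}^{p,r})\subseteq{\rm range}\,\mathcal{O}^{p}(A,\meas{C})$. Every element of this range is a length-$p$ sequence of the form $\meas{C}A^{k}z$, that is, a combination of terms $k^{\ell}\mu^{k}$ with $\mu$ an eigenvalue of $A$. The matrix $\meas{T}_{2}^{n,n}(\meas{T}_{1}^{n,n})^{+}$ acts as the shift on this sequence space, so its nonzero eigenvalues are those of the observable part of $A$. Thus the eigenvalue hypotheses say that the modes of the linear part are bounded away from $1$.

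\textbf{Sector bounded case.} Write $\Psi_{\rm p}=\mathbf{1}\otimes B+\mathcal{R}$, where the stacked remainder satisfies $\|\mathcal{R}\|\le\sqrt{p}\,\gamma$.
\begin{enumerate}
\item The columns of $\mathbf{1}\otimes B$ are constant sequences, i.e.\ mode $\lambda=1$, and nonzero because $B$ has full column rank.
\item For $p>n$, exponential-polynomial sequences with distinct modes are linearly independent by a confluent Vandermonde argument. Since $1$ is not a mode of $A$, $\mathbf{1}\otimes Bv\notin{\rm range}(\meas{T}_{1}^{p,r})$ for any $v\neq 0$.
\item Hence $\alpha_{B}:=\sigma_{\rm min}(P(\mathbf{1}\otimes B))>0$.
\item Weyl's inequality then gives $\alpha\ge\alpha_{B}-\sqrt{p}\,\gamma$, so one may take $\gamma_{0}=\alpha_{B}/\sqrt{p}$.
\end{enumerate}

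\textbf{$L$-smooth case.} Since $\est{x}_{k}-x_{k}=\Delta x$ for every $k$, the integral form of the mean value theorem together with \eqref{eq:Lsmooth} gives $\|M(\est{x}_{k},x_{k})-J(x_{k})\|\le L\|\Delta x\|/2$. Therefore $\Psi_{\rm p}$ is within $\sqrt{p}\,L\epsilon_{0}/2$ of the stacked Jacobians $\mathcal{J}=[J(x_{0});J(x_{-1});\dots]$. The columns of $\mathcal{J}v$ are band-limited, with modes in $|\lambda-1|\le\rho$, and nonzero because each $J(x_{k})$ has full column rank. The modes of $A$ satisfy $|\mu-1|>\rho$, so the two mode sets are disjoint. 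The same independence argument as in the sector case then gives $\sigma_{\rm min}(P\mathcal{J})=:\alpha_{J}>0$. Choosing $\epsilon_{0}<2\alpha_{J}/(\sqrt{p}L)$ yields $\alpha>0$ uniformly over the admissible $\est{x}_{0}$.

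\textbf{Main obstacle.} I expect the hardest step to be the independence and uniform positivity of $\sigma_{\rm min}(P\,\cdot)$. This requires that the window length $p$ exceed the combined number of exponential-polynomial modes, namely $n$ plus the order of the band-limited representation. The hypothesis $p>n$ may need to be read, or slightly strengthened, to cover this. The approach I would try first is a generalized (confluent) Vandermonde determinant argument on the union of modes, followed by the observation that $\sigma_{\rm min}$ of a fixed injective map is strictly positive.
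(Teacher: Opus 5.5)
Your route is the paper's: first show that \eqref{eqn:mainbound} follows once $\alpha>0$ (Lemma~\ref{lem:mainlemma} and Theorem~\ref{thm:maintheorem}). Then get $\alpha>0$ from two ingredients: independence of the unperturbed columns of $\Psi_{\rm p}$ from ${\rm range}(\meas{T}_{1}^{p,r})$ (Theorem~\ref{thm:hindependent}), and a perturbation step.

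Your two outer steps are tidier than the paper's. The identity $T_{p+1}^{p_f,r}\Delta U=T_{p+1}^{p_f,r}(\meas{T}_1^{p,r})^{+}\meas{T}_1^{p,r}\Delta U$ replaces the $\lambda\searrow 0$ Tikhonov limit. Weyl's inequality gives the explicit $\gamma_0=\alpha_B/\sqrt{p}$, where the paper only appeals to continuity of $f(E)$.

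Two points need to be made explicit.
\begin{itemize}
\item You must choose $M=B+R$, respectively $M(x,x_0)=\int_0^1 J(x_0+t(x-x_0))\,dt$, in Lemma~\ref{lem:MPsi}. With the displayed rank-one $M$, every block of $\Psi_{\rm p}$ shares the row factor $(x-x_0)^{T}$, so $\alpha=0$ whenever $s>1$.
\item The nilpotent part of $A$ produces sequences $\meas{C}A^{k}z$ supported on the first few samples. These are not of the form $k^{\ell}\mu^{k}$; the paper handles them separately through the matrix $L$. Your confluent-Vandermonde step must include them.
\end{itemize}
A clean way to do this in the sector case: let $(A_o,\meas{C}_o)$ be the observable quotient, whose spectrum is the nonzero eigenvalues of $\meas{T}_2^{n,n}(\meas{T}_1^{n,n})^{+}$ plus zeros. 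Suppose $\meas{C}_oA_o^{k}z=w$ for $k=0,\dots,p-1$. Because $p\ge n+1$, the vector $(z,1)$ is unobservable for $(\operatorname{diag}(A_o,1),[\meas{C}_o\ \ {-w}])$. The unobservable subspace is invariant, so it splits along generalized eigenspaces. Since $1\notin\operatorname{spec}(A_o)$, this forces $(0,1)$ into it, i.e.\ $w=0$. This settles the sector case with only $p>n$.

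The obstacle you flag in the $L$-smooth case is genuine, and it is a defect of the statement rather than of your argument. The paper's proof asserts that $[\mathcal{H}_B\ \mathcal{H}]$ has full column rank ``by Theorem~\ref{thm:hindependent}'' without checking that theorem's window-length hypothesis. With only $p>n$ the second bullet is false. Here is a counterexample.
\begin{itemize}
\item Take scalar $u,x,y$ with all outputs measured, and $g_k=(1/2)^{k-1}$ for $k\ge 1$. Then $n=1$ and $\meas{T}_2^{1,1}(\meas{T}_1^{1,1})^{+}=1/2$, with $|1/2-1|>\rho:=0.4$.
\item Take $p=2$ and $r=p_f=1$; the null-space condition holds trivially.
\item Take $f(u)=u$, $\phi(x)=2x^2-3x$ (so $L=4$), and inputs chosen so that $x_k=(0.8)^k$.
\end{itemize}
Then $J(x_k)=-3+4(0.8)^k$ is band-limited with modes $1$ and $0.8$, and it is nonzero at every integer $k$. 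Yet $(J(x_0),J(x_{-1}))^{T}=(1,2)^{T}$ lies in the range of $\meas{T}_1^{2,1}=(1/2,1)^{T}$.

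Exactly, $\Psi_{\rm p}(x_0+\Delta,x_0,U_{\rm p})=(1+2\Delta,\,2+2\Delta)^{T}$. Its projection onto the complement of ${\rm range}(\meas{T}_1^{2,1})$ has norm $2|\Delta|/\sqrt{5}$, so $\alpha=0$ on every ball about $x_0$. With the least-squares $\est{U}^{\rm e}$ one gets $\|\est{\meas{Y}}_{\rm p}-\meas{Y}_{\rm p}\|=2\Delta^2/\sqrt{5}$ while $\|\est{Y}_{\rm f}-Y_{\rm f}\|=|\Delta|\,|0.3-1.4\Delta|$. Hence no finite gain works on any neighborhood of $x_0$.

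Some strengthening of the kind you propose is therefore unavoidable. Under $p\ge n+d$, with $d$ the number of modes (counted with multiplicity) in the band-limited representation of $J(x_k)$, your argument goes through as written, including the $\epsilon_0<2\alpha_J/(\sqrt{p}L)$ step.
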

The proof of this theorem is the subject of the remainder of the paper. The bound is similar to Theorem \ref{thm:maintheoremlinear}, but has additional terms related to the nonlinear equilibrium function. The theorem provides a set of observability conditions for the IREM: in addition to ${\rm Null}(\meas{T}_{1}^{p,r})\subseteq {\rm Null}(T_{p+1}^{p_{f},r})$, the eigenvalues of $\meas{T}_{2}^{n,n}(\meas{T}_{1}^{n,n})^{+}$ are restricted to not include $1$ or a region around $1$, depending on the form of the nonlinearity. The non-zero eigenvalues of  $\meas{T}_{2}^{n,n}(\meas{T}_{1}^{n,n})^{+}$ are the same as the eigenvalues of a state-space representation (see the proof of Lemma \ref{lem:impulserepresentation}), thus this requirement implies that the impulse response part of the representation contains no integrators, and all integration action is part of the equilibrium model.  

\section{Proofs of Theorems \ref{thm:maintheoremlinear} and \ref{thm:mainresult}}\label{sec:proofs}

The proofs of the main results are completed in this section. The first lemma provides a relationship between the solutions to two different algebraic relationships with common variables. 

\begin{lemma}\label{lem:mainlemma}
    Given algebraic relationships
    \begin{align}
    Z_{1} = M_{1} X + \Phi_{1}(w) \label{eqn:nonlinrel1}, \\
    Z_{2} = M_{2} X + \Phi_{2}(w)\label{eqn:nonlinrel2}.
    \end{align}
    where $\Phi_{1}$ and $\Phi_{2}$ can be represented at each $w$ as $\Phi_{i}(w) = \Psi_{i}(w)w$.
    Let $L_{1} = \sup_{w}\|\Psi_{1}(w)\|$, $L_{2} = \sup_{w}\|\Psi_{2}(w)\|$,
    and
    \[
    \alpha = \inf_{w}\sigma_{\rm min}((I-M_{1}M_{1}^{+})\Psi_{1}(w)).
    \]
    \begin{itemize}
    \item If $w=0$, and ${\rm Null}(M_1)\subseteq {\rm Null}(M_2)$ then 
    \[
    \|Z_{2}\| \leq \sigma_{\rm max}(M_{2}M_{1}^+) \|Z_{1}\|. 
\]
    \item If $\alpha>0$ and ${\rm Null}(M_1)\subseteq {\rm Null}(M_2)$, then 
    \[
    \|Z_{2}\| \leq \left((1+(L_{1}/\alpha))\sigma_{\rm max}(M_{2}M_{1}^+) + L_{2}/\alpha\right)\|Z_{1}\|. 
\]
\end{itemize}
\end{lemma}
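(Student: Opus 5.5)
The plan is to decompose $Z_1$ with the orthogonal projector $P := M_1M_1^{+}$ onto ${\rm Range}(M_1)$. Its complement $I-P$ projects onto ${\rm Range}(M_1)^{\perp}$. We will use two standard pseudo-inverse facts: $M_1^{+}M_1$ is the orthogonal projector onto ${\rm Null}(M_1)^{\perp}$, and $M_1^{+}(I-P)=0$. The key consequence of the hypothesis ${\rm Null}(M_1)\subseteq{\rm Null}(M_2)$ is
\[
M_2 = M_2 M_1^{+} M_1 .
\]
To see this, write any vector $X$ as $X = X_0 + X_1$ with $X_0\in{\rm Null}(M_1)$ and $X_1\in{\rm Null}(M_1)^{\perp}$. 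Then $M_1^{+}M_1X = X_1$, and $M_2X_0=0$ because $X_0\in{\rm Null}(M_2)$. Hence $M_2X = M_2X_1 = M_2M_1^{+}M_1X$.

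\emph{Case $w=0$.} Here $Z_1 = M_1X$ and $Z_2 = M_2X$, so
\[
Z_2 = M_2M_1^{+}M_1X = M_2M_1^{+}Z_1 .
\]
This gives $\|Z_2\|\le\sigma_{\rm max}(M_2M_1^{+})\|Z_1\|$ directly. The same calculation proves Theorem~\ref{thm:maintheoremlinear}, taking $Z_1=\est{\meas{Y}}_{\rm p}-\meas{Y}_{\rm p}$, $Z_2=\est{Y}_{\rm f}-Y_{\rm f}$, $X=\est{U}^{\rm e}-U^{\rm e}$, $M_1=\meas{T}_1^{p,r}$ and $M_2=T_{p+1}^{p_f,r}$.

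\emph{Case $\alpha>0$.} The first step is to bound $\|w\|$. Apply $I-P$ to \eqref{eqn:nonlinrel1}. Since $(I-P)M_1=0$, this gives $(I-P)Z_1 = (I-P)\Psi_1(w)w$. By the definition of $\alpha$ as an infimum of minimum gains,
\[
\alpha\|w\|\le \|(I-P)\Psi_1(w)w\| = \|(I-P)Z_1\|\le\|Z_1\|,
\]
using that an orthogonal projector is nonexpansive. Hence $\|w\|\le \|Z_1\|/\alpha$.

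Next we eliminate $X$. Multiplying \eqref{eqn:nonlinrel1} by $M_2M_1^{+}$ and using the identity above gives
\[
M_2M_1^{+}Z_1 = M_2X + M_2M_1^{+}\Psi_1(w)w .
\]
Subtracting this from \eqref{eqn:nonlinrel2} yields
\[
Z_2 = M_2M_1^{+}Z_1 - M_2M_1^{+}\Psi_1(w)w + \Psi_2(w)w .
\]
Taking norms and using $\|\Psi_i(w)\|\le L_i$,
\[
\|Z_2\|\le \sigma_{\rm max}(M_2M_1^{+})\bigl(\|Z_1\| + L_1\|w\|\bigr) + L_2\|w\| .
\]
Substituting $\|w\|\le\|Z_1\|/\alpha$ gives the stated bound.

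The step I expect to need the most care is the identity $M_2=M_2M_1^{+}M_1$, which rests on the projector properties of the pseudo-inverse. Those properties can be obtained from the limit characterization in Lemma~\ref{lem:pseduoinverse}, or taken as standard facts. Everything after that identity is a direct triangle-inequality estimate.
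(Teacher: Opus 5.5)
Your proof is correct and follows the paper's overall skeleton. You project \eqref{eqn:nonlinrel1} with $I-M_1M_1^{+}$ to get $(I-M_1M_1^{+})Z_1=(I-M_1M_1^{+})\Psi_1(w)w$, hence $\|w\|\le\|Z_1\|/\alpha$. You then write $Z_2=M_2M_1^{+}(Z_1-\Phi_1(w))+\Phi_2(w)$ and apply the triangle inequality.

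The genuine difference is how $X$ is eliminated. The paper never isolates the identity $M_2=M_2M_1^{+}M_1$. Instead it solves \eqref{eqn:nonlinrel1} with the regularized inverse $(M_1^TM_1+\lambda I)^{-1}$ and carries the leftover term $\lambda(M_1^TM_1+\lambda I)^{-1}X$ through both relations. It then lets $\lambda\searrow 0$ via Lemma~\ref{lem:pseduoinverse}, using the null-space inclusion only to assert that $M_2(M_1^TM_1+\lambda I)^{-1}$ stays bounded, so the leftover term vanishes. The projected equation likewise comes from a second substitution followed by the same limit.

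Your version replaces all of this with two exact algebraic facts, $(I-M_1M_1^{+})M_1=0$ and $M_2=M_2M_1^{+}M_1$, so no limits are needed. This has two advantages:
\begin{itemize}
\item It makes explicit that the null-space hypothesis is precisely equivalent to $M_2=M_2M_1^{+}M_1$.
\item It avoids the boundedness claim the paper states without proof. That claim is true, because $M_2$ annihilates the zero-eigenvalue eigenvectors of $M_1^TM_1$, but it needs that argument.
\end{itemize}
The only cost is that you import the standard projector properties of $M_1^{+}$ rather than deriving everything from the limit formula of Lemma~\ref{lem:pseduoinverse}. The paper itself tacitly relies on $I-M_1M_1^{+}$ being an orthogonal projector when it passes from the projected equation to $\|w\|\le\|Z_1\|/\alpha$.
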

\begin{proof} Multiplying by $M_{1}^{T}$ on the left and adding $\lambda X $ on both sides of \eqref{eqn:nonlinrel1}
\[
M_1^{T}Z_{1}  + \lambda X = (M_{1}^{T}M_{1} + \lambda I)X + M_{1}^{T}\Phi_1(w).
\]

Taking $\lambda >0$, the matrix $(M_{1}^{T}M_{1} + \lambda I)$ is invertible. Multiplying by the inverse on the left and rearranging
\begin{equation}\label{eqn:solveforX}
X = (M_{1}^{T}M_{1} + \lambda I)^{-1}M_1^{T}(Z_{1} - \Phi_1(w)) + \lambda (M_{1}^{T}M_{1} + \lambda I)^{-1}X. 
\end{equation}
Plugging \eqref{eqn:solveforX} into \eqref{eqn:nonlinrel2},
\begin{multline*}
Z_{2} = M_{2}(M_{1}^{T}M_{1} + \lambda I)^{-1}M_1^{T}(Z_{1} - \Phi_{1}(w)) + \\\lambda M_{2}(M_{1}^{T}M_{1} + \lambda I)^{-1}X + \Phi_{2}(w).
\end{multline*}
We now consider two cases.

\textbf{Case 1: $w=0$.} After taking the limit as $\lambda \searrow 0$ and substituting $\Phi_{1}(w) = \Phi_{2}(w)=0$,  we use Lemma \ref{lem:pseduoinverse}, and note that since ${\rm Null}(M_{1}) \subseteq {\rm Null}(M_{2})$, $\sigma_{\rm max}( M_{2}(M_{1}^{T}M_{1} + \lambda I)^{-1})$ is bounded for all $\lambda>0$, and  $\lim_{\lambda \searrow 0}\lambda M_{2}(M_{1}^{T}M_{1} + \lambda I)^{-1} =0$, so that
\[
Z_{2} = M_{2}M_{1}^{+}Z_{1}.
\]
Thus
\[
\|Z_{2}\| \leq \sigma_{\rm max}(M_{2}M_{1}^{+})\|Z_{1}\|.
\]

\textbf{Case 2: $\alpha>0$.} Plug \eqref{eqn:solveforX} into \eqref{eqn:nonlinrel1},
\begin{multline*}
Z_{1} = M_{1}(M_{1}^{T}M_{1} + \lambda I)^{-1}M_1^{T}(Z_{1} - \Phi_{1}(w)) + \\\Phi_{1}(w) + \lambda M_{1}(M_{1}^{T}M_{1} + \lambda I)^{-1}X.
\end{multline*}
Taking the limit as $\lambda \searrow 0$, and noting that the term on the right converges to zero,
\[
Z_{1} = M_{1}M_{1}^{+}(Z_{1} - \Phi_{1}(w)) + \Phi_{1}(w).
\]
Rearranging and substituting for $\Phi_{1}(w)$,
\[
(I-M_{1}M_{1}^{+})Z_{1} = (I-M_{1}M_{1}^{+})\Psi_{1}(w)w.
\]
By assumption, $\sigma_{\rm min}((I-M_{1}M_{1}^{+})\Psi_{1}(w))\geq \alpha$, thus
\begin{equation*}
\|w\| \leq \|Z_{1}\|/\alpha.
\end{equation*}
so that 
\begin{equation}\label{eqn:boundonw}
\|\Phi_{1}(w)\| \leq \|\Psi_{1}(w)\|\frac{\|Z_{1}\|}{\alpha}\leq \frac{L_{1}}{\alpha}\|Z_{1}\|,
\end{equation}
and similar for $\Phi_{2}(w)$.

Thus
\begin{multline*}
\|Z_{2}\| \leq \sigma_{\rm max}( M_{2}(M_{1}^{T}M_{1} + \lambda I)^{-1} M_{1}^{T})\left(\|Z_{1}\| +\frac{L_{1}}{\alpha}\|Z_{1}\|\right) + \\
 \lambda\sigma_{\rm max}( M_{2}(M_{1}^{T}M_{1} + \lambda I)^{-1})\|X\|  + \frac{L_{2}}{\alpha}\|Z_{1}\|.
\end{multline*}
Taking the limit as $\lambda \searrow 0$, using Lemma \ref{lem:pseduoinverse}, \eqref{eqn:boundonw} and noting that since ${\rm Null}(M_1)\subseteq {\rm Null}(M_2)$ the term involving $\|X\|$ goes to zero, we arrive at
\[
\|Z_{2}\| \leq \left((1+L_{1}/\alpha)\sigma_{\rm max}(M_{2}M_{1}^+) + L_{2}/\alpha\right)\|Z_{1}\|. 
\]
\end{proof}

This lemma can be immediately applied to state the proof of Theorem \ref{thm:maintheoremlinear}.

{\em Proof of Theorem: \ref{thm:maintheoremlinear}.}
Subtracting \eqref{eqn:past} from  \eqref{eqn:estimate}, and \eqref{eqn:future} from \eqref{eqn:predict},
\begin{align*}
\est{\meas{Y}}_{\rm p} - \meas{Y}_{\rm p} &= \meas{T}_{1}^{p,r}(\est{U}^{\rm e} - U^{\rm e}).\\
\est{Y}_{\rm f} - Y_{\rm f} &= T_{p+1}^{p_{f},r}(\est{U}^{\rm e} - U^{\rm e}).
\end{align*}
Using Lemma \ref{lem:mainlemma} with $w=0$, $X = \est{U}^{\rm e} - U^{\rm e}$, $Z_{1} = \est{\meas{Y}}_{\rm p} - \meas{Y}_{\rm p}$, $Z_{2} = \est{Y}_{\rm f}-Y_{\rm f}$, $M_{1} = \meas{T}_{1}^{p,r}$, and $M_{2} = T_{p+1}^{p_{f},r}$, the result is proven.
\hfill \qed\vspace{.1in}

More work is needed to prove Theorem \ref{thm:mainresult}.  
In the following theorem, Lemma \ref{lem:mainlemma} is used to obtain the bounds expressed in Theorem \ref{thm:mainresult}, but with a different set of assumptions. 

\begin{theorem}\label{thm:maintheorem}
Given \eqref{eqn:altpast}, \eqref{eqn:altfuture}, \eqref{eqn:altestimate}, and \eqref{eqn:altpredict} with $x_{0}$, $U_{\rm p}$ and $U_{\rm f}$ fixed. Find $\Psi_{\rm p}(x,x_{0},U_{\rm p})$ and $\Psi_{\rm f}(x,x_{0},U_{\rm f})$ such that
\begin{align*}
\Phi_{\rm p}(x,U_{\rm p}) - \Phi_{\rm p}(x_{0},U_{\rm p}) & = \Psi_{\rm p}(x,x_{0},U_{\rm p})(x-x_{0})  \\
\Phi_{\rm f}(x,U_{\rm f}) - \Phi_{\rm f}(x_{0},U_{\rm f}) & = \Psi_{\rm f}(x,x_{0},U_{\rm f})(x-x_{0}).   
\end{align*}
Let
\[
\alpha = \inf_{x}\sigma_{\rm min}((I-\meas{T}_{1}^{p,r}(\meas{T}_{1}^{p,r})^{+})\Psi_{\rm p}(x,x_{0},U_{\rm p})),
\]
$L_{1} = \sup_{x}\|\Psi_{\rm p}(x,x_{0},U_{\rm p})\|$, and $L_{2} = \sup_{x}\|\Psi_{\rm f}(x,x_{0},U_{\rm f})\|$.
If ${\rm Null}\left(\meas{T}_{1}^{p,r}\right) \subseteq {\rm Null}\left(T_{p+1}^{p_{f},r}\right)$ and $\alpha >0$, then
\[
\resizebox{\columnwidth}{!}{$
\|\est{Y}_{\rm f} - Y_{\rm f}\|\leq \left(\left(1+\frac{L_{1}}{\alpha}\right)\sigma_{\rm max}(T_{p+1}^{p_{f},r}(\meas{T}_{1}^{p,r})^{+})+\frac{L_{2}}{\alpha}\right) \|\est{\meas{Y}}_{\rm p} - \meas{Y}_{\rm p}\|.$}
\]
\end{theorem}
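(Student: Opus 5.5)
The plan is to reduce Theorem~\ref{thm:maintheorem} to the second case of Lemma~\ref{lem:mainlemma}, as was done for Theorem~\ref{thm:maintheoremlinear}. First subtract \eqref{eqn:altpast} from \eqref{eqn:altestimate}, and \eqref{eqn:altfuture} from \eqref{eqn:altpredict}. The terms $\meas{T}_{p}U_{\rm p}$, $T_{p_{f}}U_{\rm f}$ and $T_{1}^{p_{f},p}U_{\rm p}$ cancel because $U_{\rm p}$ and $U_{\rm f}$ are fixed. This leaves
\begin{align*}
\est{\meas{Y}}_{\rm p} - \meas{Y}_{\rm p} &= \meas{T}_{1}^{p,r}(\est{U}^{\rm e}-U^{\rm e}) + \Phi_{\rm p}(\est{x}_{0},U_{\rm p}) - \Phi_{\rm p}(x_{0},U_{\rm p}),\\
\est{Y}_{\rm f} - Y_{\rm f} &= T_{p+1}^{p_{f},r}(\est{U}^{\rm e}-U^{\rm e}) + \Phi_{\rm f}(\est{x}_{0},U_{\rm f}) - \Phi_{\rm f}(x_{0},U_{\rm f}).
\end{align*}

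Next I would put these in the form \eqref{eqn:nonlinrel1}--\eqref{eqn:nonlinrel2}. Take $X=\est{U}^{\rm e}-U^{\rm e}$, $w=\est{x}_{0}-x_{0}$, $Z_{1}=\est{\meas{Y}}_{\rm p}-\meas{Y}_{\rm p}$, $Z_{2}=\est{Y}_{\rm f}-Y_{\rm f}$, $M_{1}=\meas{T}_{1}^{p,r}$ and $M_{2}=T_{p+1}^{p_{f},r}$. With $x_{0}$ fixed, define $\Phi_{1}(w)=\Phi_{\rm p}(x_{0}+w,U_{\rm p})-\Phi_{\rm p}(x_{0},U_{\rm p})$ and $\Phi_{2}(w)=\Phi_{\rm f}(x_{0}+w,U_{\rm f})-\Phi_{\rm f}(x_{0},U_{\rm f})$. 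By the hypothesized factorizations, $\Phi_{i}(w)=\Psi_{i}(w)w$ with $\Psi_{1}(w)=\Psi_{\rm p}(x_{0}+w,x_{0},U_{\rm p})$ and $\Psi_{2}(w)=\Psi_{\rm f}(x_{0}+w,x_{0},U_{\rm f})$.

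The step needing care is matching the constants. As $w$ ranges over $\mathbb{R}^{s}$, $x=x_{0}+w$ ranges over $\mathbb{R}^{s}$. Hence $\sup_{w}\|\Psi_{1}(w)\|=L_{1}$, $\sup_{w}\|\Psi_{2}(w)\|=L_{2}$, and $\inf_{w}\sigma_{\rm min}((I-M_{1}M_{1}^{+})\Psi_{1}(w))=\alpha$, exactly the quantities in the theorem. I would also check that at $w=0$ (i.e.\ $x=x_{0}$), where $\Psi$ is not determined by the identity, nothing breaks. There $\Phi_{i}(0)=0$, so any value of $\Psi_{i}(0)$ is admissible as long as it is included in the sup and inf. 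The $w=0$ case is covered anyway by the first case of the lemma, which gives an even stronger bound.

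With ${\rm Null}(M_{1})\subseteq{\rm Null}(M_{2})$ and $\alpha>0$ assumed, the second case of Lemma~\ref{lem:mainlemma} gives $\|Z_{2}\|\le((1+L_{1}/\alpha)\sigma_{\rm max}(M_{2}M_{1}^{+})+L_{2}/\alpha)\|Z_{1}\|$. This is the claimed bound. I expect no real obstacle beyond this bookkeeping, since the hard work (positivity of $\alpha$ under the sector-bounded and band-limited hypotheses) belongs to the proof of Theorem~\ref{thm:mainresult}, not to this statement.
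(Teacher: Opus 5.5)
Your proposal is correct and matches the paper's proof essentially verbatim. You subtract the model equations to cancel the known-input terms, then invoke the second case of Lemma~\ref{lem:mainlemma} with $w=\est{x}_0-x_0$, $\Psi_1(w)=\Psi_{\rm p}(w+x_0,x_0,U_{\rm p})$, $\Psi_2(w)=\Psi_{\rm f}(w+x_0,x_0,U_{\rm f})$, $M_1=\meas{T}_1^{p,r}$ and $M_2=T_{p+1}^{p_f,r}$. Your extra remark on the arbitrary value of $\Psi$ at $w=0$ is a harmless refinement that the paper leaves implicit.
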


\begin{proof} Subtracting common terms,
\begin{align*}
\est{\meas{Y}}_{\rm p} - \meas{Y}_{\rm p} &= \meas{T}_{1}^{p,r}(\est{U}^{\rm e} - U^{\rm e}) + \Phi_{\rm p}(\est{x}_{0},U_{\rm p}) - \Phi_{\rm p}(x_{0},U_{\rm p})\\
\est{Y}_{\rm f} - Y_{\rm f}& = T_{p+1}^{p_{f},r}(\est{U}^{\rm e} - U^{\rm e}) + \Phi_{\rm f}(\est{x}_{0},U_{\rm f}) - \Phi_{\rm f}(x_{0},U_{\rm f}). 
\end{align*}
The result then follows from Lemma \ref{lem:mainlemma} with $w=\est{x}_{0}-x_{0}$, $\Psi_{1}(w) = \Psi_{\rm p}(w+x_{0},x_{0},U_{\rm p})$, $\Psi_{2}(w) = \Psi_{\rm f}(w+x_{0},x_{0},U_{\rm f})$, $X = \est{U}^{\rm e} - U^{\rm e}$, $Z_{1} = \est{\meas{Y}}_{\rm p} - \meas{Y}_{\rm p}$, $Z_{2} = \est{Y}_{\rm f}-Y_{\rm f}$, $M_{1} = \meas{T}_{1}^{p,r}$, and $M_{2} = T_{p+1}^{p_{f},r}$.
\end{proof}

What remains is to show that the assumptions of Theorem \ref{thm:mainresult} lead to the assumptions of Theorem \ref{thm:maintheorem}. To this, we need to develop a few other tools. The following lemma describes a decomposition of the impulse response of a finite-dimensional system.

\begin{lemma}\label{lem:impulserepresentation}
 If $g_{k}$ is the impulse response of an $n$ dimensional linear time-invariant system, then for $k\geq 1$
\begin{align}\label{eqn:impulsedecomposition}
g_{k} &= \sum_{i=1}^{m}\sum_{\ell=0}^{\ell_{i}}K_{i,\ell}k^{\ell}\lambda_{i}^k + \sum_{i=1}^{\ell_{0}}L_{i}\delta(k-i)
\end{align}
where $K_{i,\ell}$, $L_{i}\in\mathbb{R}^{q\times m}$ are constants, $\lambda_{i}\in \mathbb{C}$ are the eigenvalues of $T_{2}^{n,n}(T_{1}^{n,n})^{+}$ not equal to zero, $\ell_i= \eta_{i}-\rho_{i}$ where $\eta_{i}$ is the algebraic multiplicity of eigenvalue $\lambda_{i}$ and $\rho_{i}$ is the geometric multiplicity, and $\ell_{0}= n-\sum_{i=1}^{m}(\eta_{i}+1)$. 
\end{lemma}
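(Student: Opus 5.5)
We would take a minimal state-space realization $(A,B,C,D)$ of dimension $n$, which exists by the Ho--Kalman construction cited above. Then $g_{k}=CA^{k-1}B$ for $k\geq 1$. The first step is to bring $A$ to Jordan form, $A=SJS^{-1}$, with the Jordan blocks split into those for nonzero eigenvalues $\lambda_{i}$ and those for the eigenvalue zero. For a Jordan block of size $\eta$ with nonzero eigenvalue $\lambda_{i}$, the entries of $J_{i}^{k-1}$ are $\binom{k-1}{j}\lambda_{i}^{k-1-j}$ for $j<\eta$. Each binomial coefficient $\binom{k-1}{j}$ is a polynomial in $k$ of degree $j$, and $\lambda_{i}^{-1-j}$ is a constant. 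Collecting terms gives contributions $K_{i,\ell}k^{\ell}\lambda_{i}^{k}$ for $k\geq 1$, where the degree $\ell$ is bounded by the largest block size minus one.

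A nilpotent block of size $\eta_{0}$ satisfies $N^{k-1}=0$ for $k-1\geq \eta_{0}$. Its contribution is therefore supported on finitely many indices $k=1,\dots,\ell_{0}$, and can be written as $\sum_{i}L_{i}\delta(k-i)$ with $L_{i}=CSN^{i-1}S^{-1}B$ restricted to that block. Summing over blocks gives the decomposition \eqref{eqn:impulsedecomposition}, with $K_{i,\ell}$ and $L_{i}$ real-or-complex constant $q\times m$ matrices. Complex eigenvalues come in conjugate pairs, so the total is real.

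The second step is to identify the $\lambda_{i}$ with the nonzero eigenvalues of $T_{2}^{n,n}(T_{1}^{n,n})^{+}$. As in the proof of Theorem \ref{thm:ssobservability}, write $T_{1}^{n,n}=\mathcal{O}^{n}\mathcal{C}$ and $T_{2}^{n,n}=\mathcal{O}^{n}A\mathcal{C}$, where $\mathcal{O}^{n}$ is the observability matrix and $\mathcal{C}$ is the $n$-step controllability matrix. For a minimal realization both have full rank $n$: $\mathcal{O}^{n}$ has full column rank and $\mathcal{C}$ has full row rank. Hence
\[
(T_{1}^{n,n})^{+}=\mathcal{C}^{+}(\mathcal{O}^{n})^{+},
\]
and
\[
T_{2}^{n,n}(T_{1}^{n,n})^{+}=\mathcal{O}^{n}A(\mathcal{O}^{n})^{+}.
\]
Now $P=\mathcal{O}^{n}(\mathcal{O}^{n})^{+}$ is the projector onto the range of $\mathcal{O}^{n}$, and $(\mathcal{O}^{n})^{+}\mathcal{O}^{n}=I$. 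The matrix $\mathcal{O}^{n}A(\mathcal{O}^{n})^{+}$ therefore acts as $A$ (in coordinates) on the range of $\mathcal{O}^{n}$ and as zero on its orthogonal complement. Its nonzero eigenvalues are exactly the nonzero eigenvalues of $A$, with the same multiplicities. Equivalently, $XY$ and $YX$ share nonzero spectra for $X=\mathcal{O}^{n}$ and $Y=A(\mathcal{O}^{n})^{+}$, since $YX=A$.

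The main obstacle is reconciling the exact index bookkeeping in the statement. The statement sets $\ell_{i}=\eta_{i}-\rho_{i}$ as the maximal polynomial degree, which is an upper bound on the largest Jordan block size minus one; higher-degree coefficients can simply be taken as zero. It also sets $\ell_{0}=n-\sum_{i}(\eta_{i}+1)$ for the length of the delta tail, which should be read as an upper bound on the nilpotent index; the convention here may be off by the $D$ term or by one. We would prove the decomposition with the degree bounded by the largest block size minus one, which is at most $\eta_{i}-\rho_{i}$, and with tail length at most the algebraic multiplicity of zero. Extra terms beyond the stated bounds are handled by setting their coefficients to zero. The zero eigenvalue's multiplicity $n-\sum_{i}\eta_{i}$ bounds the nilpotent index, and we would check it against the stated $\ell_{0}$ carefully.
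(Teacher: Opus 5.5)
Your route is the paper's: a Ho--Kalman minimal realization, then the Jordan expansion of $CA^{k-1}B$ for $k\geq 1$ (which the paper simply cites). Next comes the factorization $T_{1}^{n,n}=\mathcal{O}^{n}\mathcal{C}$, $T_{2}^{n,n}=\mathcal{O}^{n}A\mathcal{C}$ with $(T_{1}^{n,n})^{+}=\mathcal{C}^{+}(\mathcal{O}^{n})^{+}$, which reduces $T_{2}^{n,n}(T_{1}^{n,n})^{+}$ to $\mathcal{O}^{n}A(\mathcal{O}^{n})^{+}$. Your $XY$/$YX$ step is exactly the paper's Sylvester-determinant step. Your invariant-subspace remark is slightly stronger than the paper's determinant computation: it shows $\mathcal{O}^{n}A(\mathcal{O}^{n})^{+}$ is similar to $A\oplus 0$, so geometric multiplicities (the $\rho_i$) also agree, not just the characteristic polynomial. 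Your reading of $\ell_{i}=\eta_{i}-\rho_{i}$ as an upper bound on (largest Jordan block size minus one), with zero padding, is correct.

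The loose end you leave open should be closed, and not in the direction you suggest. The stated $\ell_{0}=n-\sum_{i}(\eta_{i}+1)$ is \emph{not} an upper bound on the nilpotency index, so zero padding cannot rescue it. It is smaller, by the number $m$ of distinct nonzero eigenvalues, than the count of zero eigenvalues $n-\sum_{i}\eta_{i}$, and the shortfall can matter. Take $A=\operatorname{diag}(\lambda_{1},0)$ with $\lambda_{1}\neq 0$ and $B=C^{T}=[1\;1]^{T}$. This realization is minimal, since $[B\;AB]$ is nonsingular. Then $g_{k}=\lambda_{1}^{k-1}+\delta(k-1)$, so $g_{1}=2$ and $g_{2}=\lambda_{1}\neq\lambda_{1}g_{1}$. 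The stated formula gives $\ell_{0}=2-2=0$, and with $\ell_{1}=0$ the lemma would require $g_{k}=K_{1,0}\lambda_{1}^{k}$, which forces $g_{2}=\lambda_{1}g_{1}$, a contradiction.

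What your argument proves is the decomposition with $\ell_{0}=n-\sum_{i}\eta_{i}$. This is also what the paper's proof actually asserts when it says $\ell_{0}$ ``will be equal to the number of eigenvalues of $A$ at zero,'' so the formula in the statement is inconsistent with the paper's own proof. State the lemma with this value rather than hedging.

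The correction is harmless downstream. Since $\eta_{i}=\ell_{i}+\rho_{i}\geq \ell_{i}+1$, one still has $\ell_{0}\leq n-\sum_{i}(\ell_{i}+1)$, which is the only property of $\ell_{0}$ used in the proof of Theorem~\ref{thm:finaltheorem}. The $D$ term plays no role either way, since the decomposition only concerns $k\geq 1$.
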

\begin{proof} A minimal state-space realization of the impulse response for $k\geq 1$ can be found using the Ho-Kalman algorithm (i.e. $A$, $B$, and $C$ such that $g_{k} = CA^{k-1}B$). The decomposition \eqref{eqn:impulsedecomposition} is then easily derived for a state-space system where $A$ has eigenvalues $\lambda_{i}$ and the requisite algebraic and geometric multiplicity \cite{chen1984linear}. $\ell_{0}$ will be equal to the number of eigenvalues of $A$ at zero. It remains to show that $T_{2}^{n,n}(T_{1}^{n,n})^{+}$ has the same non-zero eigenvalues as a minimal state-space realization. A state-space realization is related to $T_{2}^{n,n}$ and $T_{1}^{n,n}$ via $T_{1}^{n,n} = \mathcal{O}^{n}\mathcal{C}$, and $T_{2}^{n,n} = \mathcal{O}^{n}A\mathcal{C}$, using the same notation as in the proof of Theorem \ref{thm:ssobservability}. If the realization is minimal, $\mathcal{O}^{n}$ is full column rank and $\mathcal{C}$ is full row rank so that $(\mathcal{O}^{n})^{+}\mathcal{O}^{n} = I$, and $\mathcal{C}\mathcal{C}^{+}=I$. This also implies $(\mathcal{O}^{n}\mathcal{C})^{+} = \mathcal{C}^{+}(\mathcal{O}^{n})^{+}$. 
Thus
\[
T_{2}^{n,n}(T_{1}^{n,n})^{+} = \mathcal{O}^{n}A\mathcal{C}\mathcal{C}^{+}(\mathcal{O}^{n})^{+}.
\]
 Let $r$ be the number of rows of $\mathcal{O}^{n}$. Then
\begin{align*}
\det(\lambda I_{r} - T_{2}^{n,n}(T_{1}^{n,n})^{+}) &= \det(\lambda I_{r} - \mathcal{O}^{n}A(\mathcal{O}^{n})^{+}), \\
& = \lambda^{r-n}\det(\lambda I_{n} - (\mathcal{O}^{n})^{+}\mathcal{O}^{n}A),\\
& = \lambda^{r-n}\det(\lambda I_{n} - A).
\end{align*}
In the second step we used Sylvester's determinant theorem ($\det(X+AB) = \det(X)\det(I+BX^{-1}A)$), along with $\det(aB) = a^{m}\det(B)$ for $B\in\mathbb{R}^{m\times m}$. Thus the characteristic equation for $T_{2}^{n,n}(T_{1}^{n,n})^{+}$ contains the same roots as the characteristic equation for $A$, plus additional roots at zero.
\end{proof}

The next set of results, culminating in Theorem  \ref{thm:hindependent}, establishes when a matrix made up of terms of the form $k^{\ell}\lambda^{k}$ is full column rank. One well known way to prove the independence of the functions $\lambda^{k}$ with different $\lambda$ is to utilize the shift property $\lambda^{k+k_{0}} =\lambda^{k_0}\lambda^{k}$ (i.e. the time shifted function is a scaled version of itself). This approach is adapted here to show that $k^{\ell_{0}}\lambda^{k}$ as columns of a matrix are linearly independent. The functions $k^{\ell_{0}}\lambda^{k}$ have a related shift property, where the shifted function can be written as a linear combination of $k^{\ell}\lambda^{k}$ with $\ell\leq \ell_{0}$ but with the same $\lambda$.

\begin{lemma}\label{lem:remap}
Let
\[
h({k})_{(\lambda,\ell_{0})} = k^{\ell_{0}}\lambda^{k},
\]
where $\ell_{0}$ is a nonnegative integer.
For any $\Delta$
\[
h(k+\Delta)_{(\lambda,\ell_{0})} = \sum_{\ell=0}^{\ell_{0}}b_{\ell}h({k})_{(\lambda,\ell)}
\]
with 
\[
b_{\ell} = \binom{\ell_{0}}{\ell}\Delta^{\ell_{0}-\ell}\lambda^{\Delta}. 
 \]
\end{lemma}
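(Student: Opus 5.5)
This is a direct algebraic identity, and I would prove it by factoring out the exponential part and applying the binomial theorem to the polynomial part. Starting from the definition,
\[
h(k+\Delta)_{(\lambda,\ell_{0})} = (k+\Delta)^{\ell_{0}}\lambda^{k+\Delta} = \lambda^{\Delta}\,(k+\Delta)^{\ell_{0}}\,\lambda^{k}.
\]
This uses the ordinary shift property $\lambda^{k+\Delta}=\lambda^{\Delta}\lambda^{k}$ mentioned just before the lemma. That property holds for integer $\Delta$ and any $\lambda\in\mathbb{C}$ with $\lambda\neq 0$. For non-integer $\Delta$ it also holds with a fixed branch of the complex power. The nonzero case is the one relevant here, since by Lemma~\ref{lem:impulserepresentation} the $\lambda_i$ are the nonzero eigenvalues.

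Next I would expand $(k+\Delta)^{\ell_{0}}$ by the binomial theorem, using the fact that $\ell_{0}$ is a nonnegative integer:
\[
(k+\Delta)^{\ell_{0}}=\sum_{\ell=0}^{\ell_{0}}\binom{\ell_{0}}{\ell}\Delta^{\ell_{0}-\ell}k^{\ell}.
\]
Multiplying each term by $\lambda^{\Delta}\lambda^{k}$ and recognizing $k^{\ell}\lambda^{k}=h(k)_{(\lambda,\ell)}$ gives
\[
h(k+\Delta)_{(\lambda,\ell_{0})}=\sum_{\ell=0}^{\ell_{0}}\binom{\ell_{0}}{\ell}\Delta^{\ell_{0}-\ell}\lambda^{\Delta}\,h(k)_{(\lambda,\ell)}.
\]
Reading off the coefficients yields $b_{\ell}$ exactly as stated.

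There is no real obstacle in this argument. The only points needing care are the convention $0^{0}=1$ in $\Delta^{\ell_0-\ell}$ when $\Delta=0$ and $\ell=\ell_0$, so that the identity reduces correctly to $b_{\ell_0}=1$, and the remark that $\lambda\neq0$, or that $\Delta$ is an integer, so that $\lambda^{\Delta}$ is well defined. I would also note, since Theorem~\ref{thm:hindependent} will use it, that the coefficient matrix mapping $\{h(k)_{(\lambda,\ell)}\}_{\ell\le\ell_0}$ to the shifted functions is upper triangular with diagonal entries $\lambda^{\Delta}$. It is therefore invertible whenever $\lambda\neq 0$.
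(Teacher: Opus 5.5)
Your proposal is correct and follows the paper's proof essentially verbatim: write $\lambda^{k+\Delta}=\lambda^{\Delta}\lambda^{k}$, expand $(k+\Delta)^{\ell_0}$ by the binomial theorem, and read off $b_\ell=\binom{\ell_0}{\ell}\Delta^{\ell_0-\ell}\lambda^{\Delta}$. Your remarks on $0^0=1$, on when $\lambda^{\Delta}$ is well defined, and on the triangular structure of the shift map (diagonal entries $\lambda^{\Delta}$) are correct. The paper leaves them implicit.
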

Proof: Using the binomial expansion, we have
\begin{align*}
h(k+\Delta)_{(\lambda,\ell_{0})} &= (k+\Delta)^{\ell_{0}}\lambda^{k+\Delta} \\ &= \left( \sum_{\ell=0}^{\ell_{0}} \binom{\ell_{0}}{\ell} \Delta^{\ell_{0}-\ell} k^\ell \right)\lambda^{k+\Delta} \\ &= 
\sum_{\ell=0}^{\ell_{0}} \underbrace{ \binom{\ell_{0}}{\ell} \Delta^{\ell_{0}-\ell} \lambda^{\Delta}}_{b_{\ell}}\underbrace{\left.\rule[-9pt]{0pt}{12pt}k^\ell \lambda^{k}\right.}_{h(k)_{(\lambda,\ell)}}.
\end{align*}
\hfill \qed

A preliminary result shows that a matrix with columns of $k^{\ell}\lambda^{k}$ with the same $\lambda$ and different $\ell$ are linearly independent.

\begin{lemma} \label{lem:vandemonde}
   Let $h({k})_{(\lambda,i)} = k^{i}\lambda^{k}$. For $\lambda \neq 0$, integers $1 \leq k_1\leq k_2$ and any collection of sorted integers $i_{1},\cdots, i_{m}$, a  matrix of the form
    \[
    M = \begin{bmatrix}h(k_{2})_{(\lambda,i_{1})} & \cdots 
    & h(k_{2})_{(\lambda,i_{m})} \\
    h(k_{2}-1)_{(\lambda,i_{1})} & \cdots 
    & h(k_{2}-1)_{(\lambda,i_{m})} \\
    \vdots & \vdots & \vdots \\
    h(k_{1})_{(\lambda,i_{1})} & \cdots &  h(k_{1})_{(\lambda,i_{m})}\end{bmatrix}
    \]
    with $k_{2}-k_{1}+1 \geq i_m-i_{1}-1$ is full column rank.
    \end{lemma}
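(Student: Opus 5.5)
The plan is to strip off the exponential factor and reduce the claim to a statement about polynomials that vanish at too many points. Write $N = k_2-k_1+1$ for the number of rows. Each entry is $h(k)_{(\lambda,i_j)} = k^{i_j}\lambda^{k}$, so
\[
M = D V, \qquad D = \operatorname{diag}\left(\lambda^{k_2},\lambda^{k_2-1},\dots,\lambda^{k_1}\right),
\]
where $V$ has entries $V_{k,j} = k^{i_j}$ for $k = k_2,\dots,k_1$. Since $\lambda\neq 0$, $D$ is invertible, so ${\rm rank}\,M = {\rm rank}\,V$. It therefore suffices to show that $V$, a generalized Vandermonde matrix in the nodes $k_1,\dots,k_2$, has full column rank. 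Here I take the sorted integers $i_1<\cdots<i_m$ to be distinct; otherwise two columns coincide and the claim is false.

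Suppose $Vc=0$ for some $c\in\mathbb{C}^m$. Then the polynomial $p(t) = \sum_{j=1}^m c_j t^{i_j}$ vanishes at the $N$ distinct integers $k_1,\dots,k_2$. All of these are $\geq 1$, hence strictly positive. Factor $p(t) = t^{i_1} q(t)$, where $q(t) = \sum_j c_j t^{i_j-i_1}$ has degree at most $i_m-i_1$. Since the nodes are nonzero, $q$ vanishes at all $N$ of them.

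Two routes then give $c=0$:
\begin{itemize}
\item \emph{Degree count.} A nonzero $q$ has at most $i_m-i_1$ roots. So if $N \geq i_m-i_1+1$, $q$ must be identically zero. Because the exponents $i_j-i_1$ are distinct, this forces every $c_j=0$.
\item \emph{Descartes' rule of signs (sharper).} I would apply it to the real and imaginary parts of $c$ separately, each giving a real polynomial with the same zeros. A real polynomial with $m$ nonzero monomials has at most $m-1$ positive roots, so $N\geq m$ rows already suffice. This is the natural condition, since $M$ has $m$ columns and so needs at least $m$ rows.
\end{itemize}

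The main obstacle is the exact form of the row-count hypothesis. As written, $k_2-k_1+1\geq i_m-i_1-1$ cannot be sufficient. For example, with $i=(0,1,2)$ it allows $N=1$, and a single row cannot have rank $3$. I would therefore state and use the hypothesis as $k_2-k_1+1 \geq i_m-i_1+1$, which is presumably the intended reading. The Descartes argument above needs only the weaker $k_2-k_1+1\geq m$, and this is the form I would carry forward into the later independence result (Theorem~\ref{thm:hindependent}), where columns with different $\lambda$ are combined via the shift property of Lemma~\ref{lem:remap}.
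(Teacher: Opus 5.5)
Your degree-count route is the paper's proof in polynomial form. The paper factors $M=\operatorname{diag}(\lambda^{k}k^{i_1})\,V'$ with $V'_{k,j}=k^{i_j-i_1}$, and notes that these columns are columns of a $(k_2-k_1+1)\times(k_2-k_1+1)$ Vandermonde matrix. That is the same as your statement that a nonzero $q$ of degree at most $i_m-i_1$ cannot vanish at $k_2-k_1+1$ distinct nodes.

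Your objection to the hypothesis is correct, and it applies to the paper's proof as well. The Vandermonde embedding needs $i_m-i_1\le k_2-k_1$, i.e.\ $k_2-k_1+1\ge i_m-i_1+1$. So the stated bound $i_m-i_1-1$ should read $i_m-i_1+1$; your one-row example with $i=(0,1,2)$ shows the lemma is false as written. Likewise, ``sorted'' has to mean strictly increasing.

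The Descartes route is a genuine strengthening that the paper does not have. It amounts to the nonsingularity of generalized Vandermonde matrices with positive nodes. It needs only as many rows as columns, regardless of gaps among the exponents. That makes it convenient in the elimination argument of Theorem~\ref{thm:hindependent}, since it does not matter which columns have been deleted. The cost is a less elementary tool and splitting $c$ into real and imaginary parts, which is legitimate because $V$ is real.
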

Proof: We have
\[
\resizebox{\columnwidth}{!}{$\begin{aligned}
M& = \begin{bmatrix} k_{2}^{i_{1}}\lambda^{k_{2}} & \cdots & k_{2}^{i_{m}}\lambda^{k_{2}} \\
(k_{2}-1)^{i_{1}}\lambda^{k_{2}-1} & \cdots & (k_{2}-1)^{i_{m}}\lambda^{k_{2}-1} \\
    \vdots & \vdots & \vdots \\
(k_{1})^{i_{1}}\lambda^{k_{1}} & \cdots & (k_{1})^{i_{m}}\lambda^{k_{1}}\end{bmatrix}
\\
& = \operatorname{diag}\begin{bmatrix} \lambda^{k_{2}}k_{2}^{i_{1}}\\
\lambda^{k_{2}-1}(k_{2}-1)^{i_{1}} \\
    \vdots   \\
\lambda^{k_{1}}(k_{1})^{i_{1}} 
\end{bmatrix}\begin{bmatrix} 1 &k_{2}^{i_{2}-i_{1}}& \cdots & k_{2}^{i_{m}-i_{1}} \\
 1 & (k_{2}-1)^{i_{2}-i_{1}} & \cdots & (k_{2}-1)^{i_{m}-i_{1}} \\
    \vdots & \vdots & \vdots \\
1  & k_{1}^{i_{2}-i_{1}} & \cdots & (k_{1})^{i_{m}-i_{1}}\end{bmatrix}
\end{aligned}$}
\]
Since $\lambda\neq 0$ the matrix on the left is clearly invertible. The matrix on the right contains columns of a $k_{2}-k_{1}+1$ by $k_{2}-k_{1}+1$ Vandermonde matrix, which is invertible, so the columns are linearly independent. Thus $M$ is full column rank.\hfil \qed

The following theorem uses the shift property from Lemma \ref{lem:remap} along with Lemma \ref{lem:vandemonde} to show a matrix made up of $k^{\ell}\lambda^{k}$ with different $\lambda$ is linearly independent, via proof by contradiction. 

\begin{theorem}\label{thm:hindependent}
    Given unique $\lambda_{i},\ell_{i}$, $i=1,\cdots,m$, $\lambda_{i}\neq 0$, and integers $1\leq k_{1}\leq k_{2}$. Let $h({k})_{(\lambda_{i},j)} = k^{j}\lambda_{i}^{k}$ for $i=1,\cdots,m$ and $j=0,\cdots,\ell_{i}$.  A matrix of the form
    \[
    \resizebox{\columnwidth}{!}{$
    M = \begin{bmatrix}h(k_{2})_{(\lambda_{1},0)} & \cdots 
    & h(k_{2})_{(\lambda_{1},\ell_{1})} & h(k_{2})_{(\lambda_{2},0)}& \cdots &  h(k_{2})_{(\lambda_{m},\ell_{m})} \\
    h(k_{2}-1)_{(\lambda_{1},0)} & \cdots 
    & \cdots & \cdots &\cdots &  h(k_{2}-1)_{(\lambda_{m},\ell_{m})} \\
    \vdots & \vdots & \vdots & \vdots& \vdots& \vdots  \\
    h(k_{1})_{(\lambda_{1},0)} & \cdots & h(k_{1})_{(\lambda_{1},\ell_{1})} & h(k_{1})_{(\lambda_{2},0)} &\cdots &  h(k_{1})_{(\lambda_{m},\ell_{m})} \end{bmatrix}$}
    \]
    with $k_{2}-k_{1}+1 \geq \sum_{i=1}^{m}(\ell_{i}+1)$ is full column rank.
\end{theorem}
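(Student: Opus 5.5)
We argue by contradiction, following the strategy announced before the statement: use the shift property of Lemma~\ref{lem:remap} to strip off one eigenvalue at a time, and finish with Lemma~\ref{lem:vandemonde}. Suppose $Mc=0$ for some nonzero coefficient vector $c$ with entries $c_{i,j}$. Then the sequence
\[
s(k)=\sum_{i=1}^{m}\sum_{j=0}^{\ell_i} c_{i,j}\,k^{j}\lambda_i^{k}
\]
vanishes for $k=k_1,\dots,k_2$. Write $s(k)=\sum_i p_i(k)\lambda_i^k$ with $p_i$ a polynomial of degree at most $\ell_i$, and set $N=\sum_i(\ell_i+1)$, so the window has length $k_2-k_1+1\ge N$. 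We induct on $m$. The base case $m=1$ is exactly Lemma~\ref{lem:vandemonde} with $i_1,\dots,i_m$ equal to $0,\dots,\ell_1$. Its hypothesis $k_2-k_1+1\ge \ell_1-1$ holds, and in fact we have $\ell_1+1$ rows, enough for the Vandermonde argument to apply.

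For the inductive step, pick one eigenvalue, say $\lambda_m$, and apply the annihilating difference operator $(\mathcal{S}-\lambda_m)$ to $s$, where $\mathcal{S}s(k)=s(k+1)$. By Lemma~\ref{lem:remap} with $\Delta=1$, the shift maps $k^{j}\lambda_i^k$ into $\lambda_i\sum_{\ell\le j}\binom{j}{\ell}k^\ell\lambda_i^k$, so the span of $\{k^j\lambda_i^k\}_{j\le\ell_i}$ is invariant under the shift. On that span, $\mathcal{S}-\lambda_m$ acts as an upper triangular matrix with diagonal entries $\lambda_i-\lambda_m$. For $i\ne m$ these diagonal entries are nonzero, because the $\lambda_i$ are distinct, so the operator is invertible on that block. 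For $i=m$ it is strictly triangular and lowers the degree of $p_m$ by one. Applying $(\mathcal{S}-\lambda_m)^{\ell_m+1}$ therefore kills the $\lambda_m$ term entirely. What remains is $\sum_{i<m}\tilde p_i(k)\lambda_i^k$, where $(\tilde p_i)$ is the image of $(p_i)$ under an invertible map on each block. This new sequence vanishes on the shorter window $k_1,\dots,k_2-\ell_m-1$, which has length at least $N-(\ell_m+1)=\sum_{i<m}(\ell_i+1)$. The induction hypothesis then gives $\tilde p_i=0$, and hence $p_i=0$, for all $i<m$.

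With all other terms removed, $s(k)=p_m(k)\lambda_m^k$ vanishes on the full window. Lemma~\ref{lem:vandemonde} with $\lambda_m\neq0$ then forces $c_{m,j}=0$ for all $j$. So $c=0$, a contradiction, and $M$ has full column rank.

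The main obstacle is the bookkeeping in the inductive step. First, we must verify that the shifted window still lies inside $[k_1,k_2]$: the operator uses the values $s(k),\dots,s(k+\ell_m+1)$, so only starting points $k\le k_2-\ell_m-1$ are allowed. Second, we must check that the lower-order terms generated by Lemma~\ref{lem:remap} keep the map on each $i\ne m$ block triangular with nonzero diagonal $\lambda_i-\lambda_m$. A secondary point is that the statement says the pairs $(\lambda_i,\ell_i)$ are ``unique''. We interpret this as distinct $\lambda_i$, since repeated eigenvalues would just be merged into a single polynomial $p_i$.
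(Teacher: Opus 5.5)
Your proof is correct and is essentially the paper's argument. The paper's replacement of $c$ by $\bar{c}/\lambda_i - c$ is precisely one application of $(\mathcal{S}-\lambda_i)/\lambda_i$ at the coefficient level, repeated (dropping one row and one column each time) until a single eigenvalue remains, at which point Lemma~\ref{lem:vandemonde} finishes. The only differences are in bookkeeping. You annihilate a whole block with $(\mathcal{S}-\lambda_m)^{\ell_m+1}$, induct on $m$, and recover $p_i=0$ from the invertible triangular action on the other blocks. The paper instead keeps the null vector nonzero through a surviving top coefficient scaled by $\lambda_j/\lambda_i-1$. Your use of $\ell_1+1$ rows in the base case also rightly avoids the too-weak hypothesis $k_2-k_1+1\ge i_m-i_1-1$ stated in Lemma~\ref{lem:vandemonde}.
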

Proof: Suppose not. Then there exists
\[
c = \begin{bmatrix} c_{\lambda_{1},0} & \cdots & c_{\lambda_{1},\ell_1} & \cdots & c_{\lambda_{m},\ell_{m}}\end{bmatrix}^{T}
\]
such that $Mc=0$, $c\neq 0$. We assume that the columns of $M$ with nonzero elements of $c$ are associated with at least two different eigenvalues, for if not, these columns comprise a matrix with the form given in Lemma \ref{lem:vandemonde}, which is full column rank, a contradiction. 

Let $\bar{\ell}_{i}$ be the largest value of $\ell$ with non-zero elements of $c$, for each $i$. For a matrix $M$, let $M(i:j,:)$ be the submatrix containing the $i$th through $j$th row, and let $p$ be the last row of $M$. By Lemma \ref{lem:remap}, with $\Delta =1$,  $M(2:p,:)c = M(1:p-1,:)\bar{c}$, where 
\[
 \bar{c}_{i,\ell} = \left(\lambda_{i}\sum_{s=\ell}^{\bar{\ell}_{i}}\binom{s}{\ell} c_{i,s} \right).
\]
Note that we have both $M(1:p-1,:)c = 0$ and $M(1:p-1,:)\bar{c}=0$.
Also note that that for all $i$,
\begin{equation}\label{eqn:coefrelationship}
\bar{c}_{i,\bar{\ell_{i}}} = \lambda_{i}c_{i,\bar{\ell_{i}}}.
\end{equation}
Choose $i$ such that $c_{i,\bar{\ell}_{i}}\neq 0$. Because of \eqref{eqn:coefrelationship}, $\bar{c}/\lambda_{i} - c$ has a zero at the index associated with $c_{i,\bar{\ell}_{i}}$. In addition, we have
\[
M(1:p-1,:)(\bar{c}/\lambda_{i} - c) = 0. 
\]
It remains to show that $(\bar{c}/\lambda_{i} - c)\neq 0$. Since there is a non-zero coefficient associated with a different eigenvalue, $\lambda_{j}$, then
\[
\bar{c}_{j,\bar{\ell_{j}}}/\lambda_{i} - c_{j,\bar{\ell_{j}}} = \left(\frac{\lambda_{j}}{\lambda_{i}}-1\right) c_{j,\bar{\ell}_{j}} 
\]
and since $\lambda_{j}\neq \lambda_{i}$ this is non zero. 

Redefine $c$  as $(\bar{c}/\lambda_{i} - c)$. Let $n$ be the index of $c$ associated with $c_{i,\bar{\ell}_{i}}$. Remove row $p$ and column $n$ from  $M$ and element $n$ from $c$, and note that for this new $M$ and $c$, we have $Mc=0$ and $c\neq 0$, thus we can repeat this process again. Continue until columns associated with only one eigenvalue remain (which will take strictly fewer than $\sum_{i=1}^{m}(\ell_{i}+1)$ iterations, so $M$ will have at least one row). By Lemma \ref{lem:vandemonde} this matrix is full column rank, which contradicts  $Mc=0$.\hfill\qed\\

A last technical result bounds the approximation error for a first order Taylor series when the function is $L$-smooth.

\begin{lemma}\label{lem:Lsmooth}
If $\meas{\phi}:\mathbb{R}^{s}\rightarrow \mathbb{R}^{q_m}$ is $L$-smooth, then we can write its first-order Taylor series expansion around a point $x_0 \in \mathbb{R}^{s}$ as
\begin{equation}
\meas{\phi}(x) = \meas{\phi}(x_0) + J(x_0) (x-x_0) + r(x,x_0)
\label{eq:taylorJr}
\end{equation}
where $r(x,x_0)$ is a $q_m \times 1$ residual vector that satisfies  
\begin{equation}
\| r(x,x_0) \| \le \frac{L}{2} \| x-x_0 \|^2.
\label{eq:resbound}
\end{equation}
\end{lemma}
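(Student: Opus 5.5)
We prove Lemma~\ref{lem:Lsmooth} by integrating the Jacobian along the line segment from $x_0$ to $x$. We write the residual as an integral and bound it using the Lipschitz condition \eqref{eq:Lsmooth}.

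Fix $x,x_0\in\mathbb{R}^{s}$ and set $d = x-x_0$. Since $\meas{\phi}$ is $L$-smooth, its Jacobian $J$ exists everywhere and is continuous (indeed Lipschitz). Hence the map $t\mapsto \meas{\phi}(x_0+td)$ is continuously differentiable on $[0,1]$, with derivative $J(x_0+td)d$ by the chain rule. Applying the fundamental theorem of calculus componentwise gives
\[
\meas{\phi}(x)-\meas{\phi}(x_0) = \int_0^1 J(x_0+td)\,d\,dt .
\]
We then define
\[
r(x,x_0) := \meas{\phi}(x)-\meas{\phi}(x_0)-J(x_0)d = \int_0^1 \bigl(J(x_0+td)-J(x_0)\bigr)d\,dt ,
\]
which yields the expansion \eqref{eq:taylorJr} by construction.

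For the bound \eqref{eq:resbound}, we take Euclidean norms and use the fact that the norm of a vector-valued integral is at most the integral of the norm. This follows from Cauchy--Schwarz applied to $\langle v,\int f\rangle$ with $v$ the unit vector in the direction of the integral. Combining this with the operator-norm inequality $\|(J(x_0+td)-J(x_0))d\|\le \|J(x_0+td)-J(x_0)\|\,\|d\|$ and with \eqref{eq:Lsmooth} applied to the points $x_0+td$ and $x_0$, we obtain
\[
\|r(x,x_0)\| \le \int_0^1 L\,t\|d\|\,\|d\|\,dt = \frac{L}{2}\|x-x_0\|^2 .
\]

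No step here is a real obstacle. The only point needing care is the norm-of-integral inequality for vector-valued integrands, which is what lets us avoid a componentwise mean value theorem. A componentwise approach would fail to give a single intermediate point for a vector-valued $\meas{\phi}$ and would lose a dimension-dependent constant.
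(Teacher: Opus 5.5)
Your proposal is correct and follows the same route as the paper's proof: write $\meas{\phi}(x)-\meas{\phi}(x_0)$ via the fundamental theorem of calculus along the segment from $x_0$ to $x$, isolate $r(x,x_0)=\int_0^1 (J(x_0+t(x-x_0))-J(x_0))(x-x_0)\,dt$, and bound it using the norm-of-integral inequality, the operator-norm inequality, and the Lipschitz condition. This gives $\int_0^1 Lt\|x-x_0\|^2\,dt=\frac{L}{2}\|x-x_0\|^2$.
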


\begin{proof}
    From the Fundamental Theorem of Calculus, we can write
    $$
    \meas{\phi}(x) - \meas{\phi}(x_0) = \int_0^1 J(x_0 + t(x - x_0)) (x - x_0) dt.
    $$
    We can isolate $r(x, x_0)$ by subtracting the linear approximation $J(x_0)(x-x_0)$ from both sides:
    $$
    r(x, x_0) = \int_0^1 \left( J(x_0 + t(x-x_0)) - J(x_0) \right) (x-x_0) dt.
    $$
    Taking the norm of both sides and applying the triangle inequality with~\eqref{eq:Lsmooth} yields the bound:
    \begin{align*}
    \|r(x,x_0)\| 
    &=
    \left\|  \int_0^1 \left( J(x_0 + t(x-x_0)) - J(x_0) \right) (x-x_0) dt \right\| \\
    &\le \int_0^1 \| \left( J(x_0 + t(x-x_0)) - J(x_0) \right) (x-x_0) \| dt \\
    &\le \int_0^1 \| J(x_0 + t(x-x_0)) - J(x_0) \| \| x-x_0 \| dt \\
    &\le \int_0^1 L t \| x-x_0 \|^2 dt \\
    &= \frac{L}{2} \|x-x_0\|^2.
    \end{align*}
\end{proof}

Theorem \ref{thm:finaltheorem} shows that the assumptions of Theorem \ref{thm:mainresult} lead to the assumptions of Theorem \ref{thm:maintheorem}, and provides the proof of Theorem \ref{thm:mainresult}.

\begin{theorem}\label{thm:finaltheorem}
Given \eqref{eqn:altpast}, \eqref{eqn:altfuture}, \eqref{eqn:altestimate}, and \eqref{eqn:altpredict} with $x_{0}$, $U_{\rm p}$ and $U_{\rm f}$ fixed. There exists $\Psi_{\rm p}$ such that $\Phi_{\rm p}(x,U_{\rm p}) - \Phi_{\rm p}(x_{0},U_{\rm p})  = \Psi_{\rm p}(x,x_{0},U_{\rm p})(x-x_{0})$ and
\[
\alpha = \inf_{x}\sigma_{\rm min}((I-\meas{T}_{1}^{p,r}(\meas{T}_{1}^{p,r})^{+})\Psi_{\rm p}(x,x_{0},U_{\rm p}))>0
\]
under either of the following conditions. 
\begin{itemize}
    \item Suppose $\meas{T}_{2}^{n,n}(\meas{T}_{1}^{n,n})^{+}$ has eigenvalues satisfying $\lambda \neq 1$. Then there exists $\gamma_{0} >0$ such that for any sector bounded function $\meas{\phi}(x)$ with a $\gamma$-approximation $B$ where $\gamma  < \gamma_{0}$ and $B$ has full column rank, then $\alpha>0$.
    \item Suppose $\meas{T}_{2}^{n,n}(\meas{T}_{1}^{n,n})^{+}$ has eigenvalues each satisfying $|\lambda-1|>\rho$. Let $x_{k}$ be the trajectory of the equilibrium state. If $\meas{\phi}$ is $L$-smooth and the Jacobian sequence $J(x_{k})$ is band-limited with bound $\rho$, and full column rank for all $k$, then there exists $\epsilon_{0}$ such that  $\alpha>0$ when $\|\est{x}_{0}-x_{0}\|<\epsilon_{0}$.
\end{itemize}
\end{theorem}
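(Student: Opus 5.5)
Write $P := I-\meas{T}_{1}^{p,r}(\meas{T}_{1}^{p,r})^{+}$, the orthogonal projector onto the complement of ${\rm Range}(\meas{T}_{1}^{p,r})$. The goal is to show that $P\Psi_{\rm p}(x,x_0,U_{\rm p})$ has trivial null space, uniformly in $x$. Both cases follow the same template. First identify a nominal matrix $\Psi^{0}$ for which $P\Psi^{0}$ is full column rank. Then show that the actual $\Psi_{\rm p}$ differs from $\Psi^{0}$ by a perturbation that is small in operator norm. Finally use $\sigma_{\rm min}(P\Psi_{\rm p})\ge \sigma_{\rm min}(P\Psi^{0})-\|\Psi_{\rm p}-\Psi^{0}\|$, since $\|P\|\le 1$.

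The key structural fact is what ${\rm Range}(\meas{T}_{1}^{p,r})$ looks like. By Lemma~\ref{lem:impulserepresentation}, every column of $\meas{T}_{1}^{p,r}$ is a stacked sequence $k\mapsto \meas{g}_{k+j}$ for $k$ in a window of length $p$. Each such sequence is a combination of the modes $k^{\ell}\lambda_i^{k}$, with $\lambda_i$ the nonzero eigenvalues of $\meas{T}_{2}^{n,n}(\meas{T}_{1}^{n,n})^{+}$, together with finitely many delta terms supported near the start of the window. So ${\rm Range}(\meas{T}_{1}^{p,r})$ sits inside the span of these mode columns and the delta columns.

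\textbf{Case 1 (sector bounded).} Take $\Psi^{0}=\mathbf{1}_p\otimes B$, the nominal matrix obtained by stacking $p$ copies of $B$. Each block of $\Psi_{\rm p}$ equals $B+R(\cdot,\cdot)$, so $\|\Psi_{\rm p}-\Psi^{0}\|\le \sqrt{p}\,\gamma$. Suppose $P\Psi^{0}v=0$ for some $v$. Then the constant sequence $Bv$ lies in the span of the modes and deltas. The constant sequence is the mode with $\lambda=1$, $\ell=0$, and since $1$ is not among the $\lambda_i$, we need to rule out this membership. Apply Theorem~\ref{thm:hindependent} on the rows beyond the delta support, with $p>n$ and enough rows compared to $\sum(\ell_i+1)+1\le n+1$, to the augmented set of modes including $\lambda=1$. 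This shows the constant column is independent of the others, so $Bv=0$, and $v=0$ because $B$ has full column rank. The delta rows are handled by restricting attention to the later rows, which only lowers $\sigma_{\rm min}$ of a row-restricted matrix that still controls the full one. Hence $\sigma_0:=\sigma_{\rm min}(P\Psi^{0})>0$. Choosing $\gamma_0=\sigma_0/\sqrt{p}$ gives $\alpha\ge\sigma_0-\sqrt p\gamma>0$. One subtlety: $\sigma_0$ depends on $B$. I will either take $\gamma_0$ dependent on $B$, or normalize, since the factor $\sigma_{\rm min}(B)$ enters multiplicatively once the column-space angle is fixed.

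\textbf{Case 2 (L-smooth).} Take $\Psi^{0}$ to be the stack of Jacobians $J(x_k)$ along the true equilibrium trajectory, $k=0,-1,\dots,-p+1$. By Lemma~\ref{lem:Lsmooth}, the $k$-th block $M(\est{x}_k,x_k)$ of $\Psi_{\rm p}$ (with shifted arguments $\est x_k-x_k=\est x_0-x_0$) satisfies $M(\est{x}_k,x_k)(\est{x}_0-x_0)=J(x_k)(\est{x}_0-x_0)+r$, with $\|r\|\le \tfrac L2\|\est x_0-x_0\|^2$. I will therefore choose $\Psi_{\rm p}$ blockwise as $J(x_k)+r\,(\est x_0-x_0)^T/\|\est x_0-x_0\|^2$. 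This gives $\|\Psi_{\rm p}-\Psi^{0}\|\le \sqrt p\,\tfrac L2\|\est x_0-x_0\|$, which is small once $\epsilon_0$ is small.

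For the nominal term, band-limitedness writes each column of the sequence $J(x_k)$ as a combination of modes $k^\ell\mu_j^k$ with $|\mu_j-1|\le\rho$. Because every $\lambda_i$ satisfies $|\lambda_i-1|>\rho$, the $\mu_j$ are disjoint from the $\lambda_i$. Applying Theorem~\ref{thm:hindependent} to the union of modes (window long enough relative to the total count) shows that a combination $J(x_\cdot)v$ lying in the range must vanish at every $k$. Full column rank of $J(x_k)$ then forces $v=0$. This gives $\sigma_{\rm min}(P\Psi^0)>0$, and $\epsilon_0=2\sigma_{\rm min}(P\Psi^0)/(L\sqrt p)$ works.

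I expect the main obstacle to be this independence step. It requires carefully matching window lengths, since $p$ must exceed the total number of modes plus the delta support. It also requires handling the case where the mode coefficients involve vectors rather than scalars. For that I will apply Theorem~\ref{thm:hindependent} row-by-row of the output, or equivalently use the Kronecker structure.
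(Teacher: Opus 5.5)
Your proposal follows the paper's proof essentially step for step. It uses the same nominal matrices ($B_{\text{full}}$, or the stacked Jacobians $J(x_k)$), the same perturbation bounds $\sqrt{p}\,\gamma$ and $\tfrac{L}{2}\sqrt{p}\,\|\est{x}_0-x_0\|$, and the same independence from ${\rm Range}(\meas{T}_{1}^{p,r})$ via Theorem~\ref{thm:hindependent} on the delta-free rows. Your inequality $\sigma_{\rm min}(P\Psi_{\rm p})\ge\sigma_{\rm min}(P\Psi^{0})-\|\Psi_{\rm p}-\Psi^{0}\|$ just makes the paper's continuity step explicit, and your row restriction is a cleaner version of its argument about the columns of $L$.

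The two caveats you raise are genuine, and the paper leaves both implicit.
\begin{itemize}
\item $\gamma_0$ must be allowed to depend on $B$. Taking $\meas{\phi}\equiv 0$ with $R=-B$ and $\|B\|$ small defeats any uniform $\gamma_0$.
\item In the $L$-smooth case, the row count needed for Theorem~\ref{thm:hindependent} is not implied by $p>n$. The band-limited expansion of $J(x_k)$ may contain arbitrarily many modes, so a bound on that number relative to $p$ has to be added as a hypothesis.
\end{itemize}
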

\begin{proof} By Lemma \ref{lem:impulserepresentation}, the impulse response $\meas{g}_{k}$ can be written as
\begin{align}
\meas{g}_{k} &= \sum_{i=1}^{m}\sum_{\ell=0}^{\ell_{i}}K_{i,\ell}k^{\ell}\lambda_{i}^k+ \sum_{i=1}^{\ell_{0}}L_{i}\delta(k-i) \\
& = \sum_{i=1}^{m}\sum_{\ell=0}^{\ell_{i}}K_{i,\ell}h(k)_{(\lambda_i,\ell)}+ \sum_{i=1}^{\ell_{0}}L_{i}\delta(k-i)
\end{align}
where all $\lambda_{i}\neq 1$.
Let
\[
H(k)_{(\lambda,\ell)} = \begin{bmatrix} h(k)_{(\lambda,0)}I &h(k)_{(\lambda,1)}I& \cdots  &h(k)_{(\lambda,\ell)}I \end{bmatrix}
\]
and
\[
K_{i} = \begin{bmatrix} K_{i,0} \\ \vdots \\ K_{i,\ell_{i}}\end{bmatrix}.
\]
Then
\[
\resizebox{\columnwidth}{!}{$\begin{bmatrix} \meas{g}_{p} \\ \meas{g}_{p-1}\\\vdots \\\meas{g}_{1}\end{bmatrix} =\underbrace{\begin{bmatrix}H(p)_{(\lambda_1,\ell_{1})} &  \cdots & H(p)_{(\lambda_{m},\ell_m)}\\
H(p-1)_{(\lambda_1,\ell_{1})} &  \cdots & H(p-1)_{(\lambda_{m},\ell_m)} \\
\vdots & & \vdots
\\ H(1)_{(\lambda_1,\ell_{1})} &  \cdots & H(1)_{(\lambda_m,\ell_{m})}\end{bmatrix}}_{\mathcal{H}}\begin{bmatrix} K_{1}  \\
K_{2}  \\
\vdots  \\
K_{m} 
\end{bmatrix} + \begin{bmatrix} 0 \\ 0 \\ \vdots \\ L_{1}\end{bmatrix}.$}
\]

Using Lemma \ref{lem:remap}, 
\[
  H(j+k)_{(\lambda,\ell)} = H(j)_{(\lambda,\ell)}\Gamma(k,\lambda,\ell)
  \]
where
\[
    \Gamma(k,\lambda,\ell)  =\begin{bmatrix} \begin{matrix}\begin{matrix}\left[C(k,\lambda,0)\right] \\ 0 \end{matrix} \\ \vdots \\ 0\end{matrix}& \begin{matrix} \left[\rule{0pt}{15pt}C(k,\lambda,1)\right]\\ \vdots \\ 0\end{matrix}& \cdots & \left[\rule{0pt}{30pt}C(k,\lambda,\ell)\right]\end{bmatrix}
\]
and 
\[
C(k,\lambda,\ell) = \begin{bmatrix} \binom{\ell}{0}k^{\ell}\lambda^{k}I \\ \binom{\ell}{1}k^{\ell-1}\lambda^{k}I \\ \vdots \\ \binom{\ell}{\ell}\lambda^{k}I
\end{bmatrix}.
\]
Noting that each column $\meas{T}_{1}^{p,r}$ is a time shifted version of the first column, we can write
\begin{align}\label{eqn:maindecomposition}
\meas{T}_{1}^{p,r} &= \begin{bmatrix} g_{p} & g_{p+1} & \cdots & g_{p+r-1} \\
g_{p-1} & g_{p} &\cdots & g_{p+r-2} \\
\vdots  & \vdots &  & \vdots \\
g_{1} & g_{2} & \cdots & g_{r}\end{bmatrix}=\mathcal{H} K + L,
\end{align}
where
\begin{align*}
K &= \begin{bmatrix}K_{1} & \Gamma(1,\lambda_{1},\ell_{1})K_{1} & \cdots & \Gamma(r,\lambda_{1},\ell_{1})K_{1} \\ 
K_{2} & \Gamma(1,\lambda_{2},\ell_{2})K_{2} & \cdots & \Gamma(r,\lambda_{2},\ell_{2})K_{2}\\
\vdots & \vdots  &  & \vdots\\
K_{m} & \Gamma(1,\lambda_{m},\ell_{m})K_{m} & \cdots & \Gamma(r,\lambda_{m},\ell_{m})K_{m}
\end{bmatrix}
\end{align*}
and
\[
L = \begin{bmatrix} 0 & 0 & \cdots & 0\\
\vdots & \vdots & & \vdots \\
L_{\ell_{0}} & 0 & \cdots & 0\\

\vdots & \ddots &  & \vdots \\
 L_{1} & L_{2} & \cdots & L_{r}
 \end{bmatrix}.
\]
Above, if $r > \ell_0$, some of the rightmost columns of $L$ will be all zeros.  

Now,  
\begin{align*}
\alpha &= \inf_{x}\sigma_{\rm min}((I-\meas{T}_{1}^{p,r}(\meas{T}_{1}^{p,r})^{+})\Psi_{\rm p}(x,x_{0},U_{\rm p}))\\
&=\inf_{\|\psi\|=1}\|(I-\meas{T}_{1}^{p,r}(\meas{T}_{1}^{p,r})^{+})\Psi_{\rm p}(x,x_{0},U_{\rm p})\psi\|  \\
& = \inf_{x,\theta,\|\psi\|=1}\|\Psi_{\rm p}(x,x_{0},U_{\rm p})\psi-\meas{T}_{1}^{p,r}\theta\|\\
& = \inf_{x,\theta,\|\psi\|=1}\|\Psi_{\rm p}(x,x_{0},U_{\rm p})\psi-(\mathcal{H}K+ L)\theta\|.
\end{align*}

This optimization problem has a positive minimum if the column space of $\Psi_{\rm p}(x,x_{0},U_{\rm p})$ is independent of the column space of $\begin{bmatrix}\mathcal{H} & L\end{bmatrix}$ for all $x$. This is shown under two different assumptions.

\textbf{Sector-bounded nonlinearity: }Since $\meas{\phi}(x)$ is a sector bounded function, $\meas{\phi}(x)-\meas{\phi}(x_{0}) = M(x,x_0)(x-x_0)$ where
$M(x,x_0) = B + R(x,x_0)$. From Lemma~\ref{lem:MPsi}, this gives $\Phi_{\rm p}(x,U_{\rm p}) - \Phi_{\rm p}(x_{0},U_{\rm p})=\Psi_{\rm p}(x,x_{0},U_{\rm p}) (x-x_{0})$ with
\[
\resizebox{\columnwidth}{!}{$\begin{aligned}
\Psi_{\rm p}(x,x_{0},U_{\rm p}) 
&= \underbrace{\begin{bmatrix}
    B \\
    B \\
    \vdots\\
    B
    \end{bmatrix}}_{B_{\text{full}}}  + 
    \underbrace{\begin{bmatrix}
    R(x,x_0) \\
    R(x - f( u_{0}),x_{0} - f( u_{0})) \\
    \vdots\\
    R\left(x - \sum_{i=-p+2}^{0}f(u_{i}),x_{0} - \sum_{i=-p+2}^{0}f(u_{i})\right) 
    \end{bmatrix}}_{E(x,x_0,U_{\rm p})}.
\end{aligned}$}
\]
Since $B$ has full column rank, $B_{\text{full}}$ will have full column rank. Also, since  $\| R(x,x_{0}) \| \leq \gamma$, we will have $\|E(x,x_0,U_{\rm p})\| \le \gamma \sqrt{p}$.

Define
 \[
 f(E) =  \inf_{\theta,\|\psi\|=1}\| B_{\text{full}} \psi + E\psi + (\mathcal{H}K+L)\theta\|.
 \]
Note that $B_{\text{full}} \psi$ is a nonzero vector that repeats every $q$ entries. So it can be written as
\[
B_{\text{full}} \psi = \underbrace{\begin{bmatrix}
H(p)_{1,0} \\ H(p-1)_{1,0} \\ \vdots \\ H(1)_{1,0}
\end{bmatrix}}_{\mathcal{H}_B} k_B
\]
for some nonzero vector $k_B$ (in fact, $k_B=B_{\text{full}} \psi$). The matrix $[\mathcal{H} ~ \mathcal{H}_B]$ has $qp$ rows and $1 + \sum_{i=1}^{m}q(\ell_{i}+1)$ columns. From Lemma~\ref{lem:impulserepresentation}, $\sum_{i=1}^{m}(\ell_{i}+1) \le n$. Since $p > n$, it follows that $qp \ge 1 + \sum_{i=1}^{m}q(\ell_{i}+1)$.
By Theorem \ref{thm:hindependent} and the fact that the system has no eigenvalue at $\lambda=1$, it follows that $[\mathcal{H} ~ \mathcal{H}_B]$ has full column rank. So $B_{\text{full}} \psi$ is linearly independent of the columns of $\mathcal{H}$ for any non-zero $\psi$.

We now argue that the columns of $L$ are linearly independent of both the columns of $\mathcal{H}$ and the columns of $B_{\text{full}}$. Suppose that some column of $L$ can be expressed as a nonzero linear combination of the columns of $[\mathcal{H} ~ \mathcal{H}_B]$. Letting $Z$ denote the number of zeros at the top of this column of $L$, we have that the restriction of the aforementioned linear combination of the columns of $[\mathcal{H} ~ \mathcal{H}_B]$ must be $0$ on its first $Z$ entries. It follows that the columns of the restriction of $[\mathcal{H} ~ \mathcal{H}_B]$ to its first $Z$ rows must be linearly dependent. So if the restriction of $[\mathcal{H} ~ \mathcal{H}_B]$ to its first $Z$ rows has linearly independent columns, this prevents that column of $L$ from being in the column span of $[\mathcal{H} ~ \mathcal{H}_B]$. This linear independence is ensured by Theorem \ref{thm:hindependent} as long as $Z \ge 1 + \sum_{i=1}^m q(\ell_i + 1)$. By the construction of $L$, $Z \ge qp - q\ell_0$. So if $qp - q\ell_0 \ge 1 + \sum_{i=1}^m q(\ell_i + 1)$, the desired conclusion holds. Equivalently, we require $qp - q\ell_0 > \sum_{i=1}^m q(\ell_i + 1)$, which is equivalent to $p - \ell_0 > \sum_{i=1}^m (\ell_i + 1)$. From Lemma~\ref{lem:impulserepresentation}, we have that $\ell_{0} = n-\sum_{i=1}^{m}(\eta_{i}+1) \le n-\sum_{i=1}^{m}(\ell_{i}+1) < p - \sum_{i=1}^{m}(\ell_{i}+1)$. So $p-\ell_0 > \sum_{i=1}^{m}(\ell_{i}+1)$. 

Since the columns of $L$ are linearly independent of both the columns of $\mathcal{H}$ and the columns of $B_{\text{full}}$, when $E=0$, no $\theta$ exists such that the minimum of $f(E)$ is zero. That is, $f(0)>0$. We note that $f$ is a continuous function. Thus, there exists $\epsilon>0$ such that $f(E)>f(0)/2$ for all $\|E\|\leq \epsilon$.

Now,
 \begin{align*}
     \alpha &=  \inf_{x,\theta,\|\psi\|=1}\|\Psi_{\rm p}(x,x_{0},U_{\rm p})\psi-(\mathcal{H}K+L)\theta\|,\\
     & = \inf_{x,\theta,\|\psi\|=1}\left\|B_{\text{full}}\psi +E(x,x_0,U_{\rm p})\psi - (\mathcal{H}K+L)\theta\right\|,\\
     & =  \inf_{x} f(E(x,x_0,U_{\rm p})).
 \end{align*}
We have $\alpha>0$ when 
\[
\left\|E(x,x_{0},U_{\rm p})\right\|\leq \gamma \sqrt{p} \leq \epsilon.
\]
Thus the results holds for  $\gamma_{0} =\frac{\epsilon}{\sqrt{p}}$. 

\textbf{$L$-smooth nonlinearity:} From the first-order Taylor series expansion of $\meas{\phi}$ in~\eqref{eq:taylorJr}, it follows that
\[
\meas{\phi}(x)-\meas{\phi}(x_0) = M(x,x_0)(x-x_0)
\]
for
\begin{equation}
M(x,x_0) = J(x_0) + \frac{r(x,x_0)(x-x_0)^T}{\| x-x_0\|^2}.
\label{eq:MfullJ}
\end{equation}
From Lemma~\ref{lem:MPsi}, this then gives $\Phi_{\rm p}(x,U_{\rm p}) - \Phi_{\rm p}(x_{0},U_{\rm p})=\Psi_{\rm p}(x,x_{0},U_{\rm p}) (x-x_{0})$ with
\[
\resizebox{\columnwidth}{!}{$
\begin{aligned}
\Psi_{\rm p}(x,x_{0},U_{\rm p}) 
&= \underbrace{\begin{bmatrix}
    J(x_0) \\
    J(x_{0} - f( u_{0})) \\
    \vdots\\
    J\left(x_{0} - \sum_{i=-p+2}^{0}f(u_{i})\right)
    \end{bmatrix}}_{B(x_{0},U_{\rm p})} + \\
    & \quad
    \underbrace{\begin{bmatrix}
    \frac{r(x,x_0)(x-x_0)^T}{\| x-x_0\|^2} \\
    \frac{r(x - f( u_{0}),x_{0} - f( u_{0}))(x-x_0)^T}{\| x-x_0\|^2} \\
    \vdots \\
    \frac{r\left(x - \sum_{i=-p-2}^{0}f(u_{i}),x_{0} - \sum_{i=-p+2}^{0}f(u_{i})\right)(x-x_0)^T}{\| x-x_0\|^2} 
    \end{bmatrix}}_{E(x,x_0,U_{\rm p})}
\end{aligned}$}
\]
Note that $E(x,x_0,U_{\rm p})$ is a rank-one matrix:
\[
\resizebox{\columnwidth}{!}{$E(x,x_0,U_{\rm p}) = \begin{bmatrix}
    \frac{r(x,x_0)}{\| x-x_0\|} \\
    \frac{r(x - f( u_{0}),x_{0} - f( u_{0}))}{\| (x - f( u_{0}))-(x_{0} - f( u_{0}))\|} \\
    \vdots \\
    \frac{r\left(x - \sum_{i=-p+2}^{0}f(u_{i}),x_{0} - \sum_{i=-p+2}^{0}f(u_{i})\right)}{\| \left(x - \sum_{i=-p+2}^{0}f(u_{i})\right)-\left(x_{0} - \sum_{i=-p+2}^{0}f(u_{i})\right)\|} 
    \end{bmatrix} \cdot \frac{(x-x_0)^T}{\| x-x_0\|}$}
\]
This allows us to bound its spectral norm:
\[
\resizebox{\columnwidth}{!}{$
\begin{aligned}
\| E(x,x_0,U_{\rm p}) \| &= \left\| \begin{bmatrix}
    \frac{r(x,x_0)}{\| x-x_0\|} \\
    \frac{r(x - f( u_{0}),x_{0} - f( u_{0}))}{\| (x - f( u_{0}))-(x_{0} - f( u_{0}))\|} \\
    \vdots \\
    \frac{r\left(x - \sum_{i=-p+2}^{0}f(u_{i}),x_{0} - \sum_{i=-p+2}^{0}f(u_{i})\right)}{\| \left(x - \sum_{i=-p+2}^{0}f(u_{i})\right)-\left(x_{0} - \sum_{i=-p+2}^{0}f(u_{i})\right)\|} 
    \end{bmatrix} \right\| \underbrace{ \left\| \frac{(x-x_0)^T}{\| x-x_0\|} \right\|}_{=1} \\
    &\le \frac{L}{2} \sqrt{p} \| x-x_0 \|,
\end{aligned}$}
\]
where the last line uses Lemma~\ref{lem:Lsmooth}. 
Define
 \[
 f(E) =  \inf_{\theta,\|\psi\|=1}\| B(x_{0},U_{\rm p}) \psi + E\psi + (\mathcal{H}K+L)\theta\|.
 \]
If $J(x_k)$ is band-limited, there exists $k_{B}$ such that 
\[
B(x_{0},U_{\rm p}) =\mathcal{H}_{B}k_{B}
\]
where the columns of $\mathcal{H}_B$ are of the form 
\[
\begin{bmatrix}
    H(p)_{\lambda,\ell}\\
     H(p-1)_{\lambda,\ell}\\
     \vdots\\
      H(1)_{\lambda,\ell}
\end{bmatrix},
\]
with $|\lambda-1|<\rho$. (We use Lemma \ref{lem:remap} to shift the indices to 1 through $p$.) By Theorem \ref{thm:hindependent}, the matrix $\begin{bmatrix}\mathcal{H}_{B} & \mathcal{H}\end{bmatrix}$ is full column rank. Using the same argument as above, the columns of $L$ are linearly independent of both the columns of $\mathcal{H}$ and the columns of $\mathcal{H}_{B}$. Thus, when $E=0$, no $\theta$ exists such that the minimum of $f(E)$ is zero, and there exists $\epsilon$ such that $f(E)>0$ for $\|E\|<\epsilon$. Since $\alpha = \inf_{x}f(E(x,x_{0},U_{\rm p}))$, $\alpha>0$ for $ \| x-x_0 \|<\epsilon_{0}=\frac{2\epsilon}{L\sqrt{p}}$.

\end{proof}

\section{Conclusion}

This paper has presented bounds on the prediction error when utilizing the IREM model structure, which combines a linear convolution model with a nonlinear part that models integrator dynamics. The parameters of the IREM model are an equivalent input and the state of the integrator dynamics, and the prediction error of all outputs in the future is bounded in terms of how well these parameters fit a subset of the system outputs in the past. Conditions for the existence of these bounds provide observability conditions for the IREM model structure, while the gain of the bound is a quantitative measure of observability. While noise and disturbances were not explicitly considered in this paper, these bounds can be the basis for analysis of these effects once models for the noise and disturbances are chosen. 

\section{Acknowledgments}
    This work has benefited from collaborations with Prof. Robert Kee and Dr. Huayang Zhu of the Colorado School of Mines, and Dr. Tyler Evans of Astroscale U.S. 
    
\bibliographystyle{unsrt}
\bibliography{myBib}

 \begin{IEEEbiography}[{\includegraphics[width=1in,height=1.25in,clip,keepaspectratio]{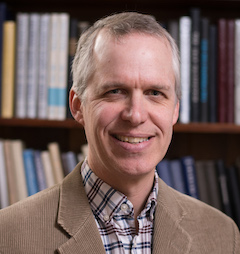}}]{Tyrone L. Vincent} (Senior Member, IEEE) received the B.S. degree in electrical engineering from the University of Arizona, Tucson, in 1992, and the M.S. and Ph.D. degrees in electrical engineering from the University of Michigan, Ann Arbor, in 1994 and 1997, respectively.
He is currently a Professor in the Department of Electrical Engineering at the Colorado School of Mines, Golden. His research interests include system identification, estimation, and control with applications in energy storage and generation.
\end{IEEEbiography}

 \begin{IEEEbiography}[{\includegraphics[width=1in,height=1.25in,clip,keepaspectratio]{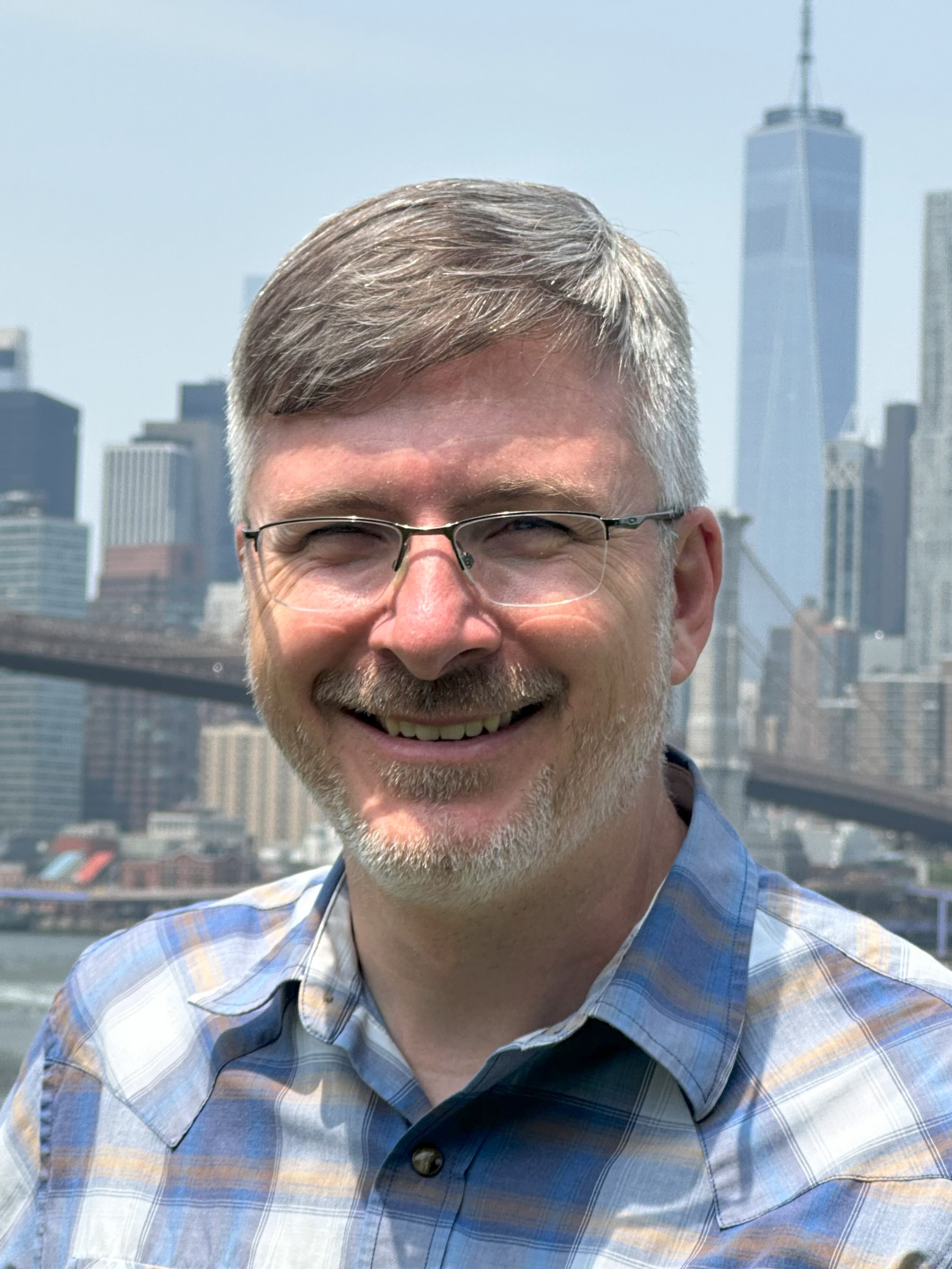}}]{Michael B. Wakin} (Fellow, IEEE) is a Professor of Electrical Engineering at the Colorado School of Mines. Dr. Wakin received a B.S. in electrical engineering and a B.A. in mathematics in 2000 (summa cum laude), an M.S. in electrical engineering in 2002, and a Ph.D. in electrical engineering in 2007, all from Rice University. He was an NSF Mathematical Sciences Postdoctoral Research Fellow at Caltech from 2006-2007, an Assistant Professor at the University of Michigan from 2007-2008, and a Ben L. Fryrear Associate Professor at Mines from 2015-2017. His research interests include signal/information processing and machine learning using sparse, low-rank, tensor, and manifold-based models.

\end{IEEEbiography}

\end{document}